\DeclareMathOperator{\coker}{coker}
\DeclareMathOperator{\im}{im}
\newcommand{\B}{\mathcal{B}}
\newcommand{\M}{\mathcal{M}}
\renewcommand{\S}{\mathcal{S}}
\newcommand{\bR}{\mathbb{R}}
\newcommand{\bH}{\mathbb{H}}
\newcommand{\bM}{\mathbb{M}}
\newcommand{\bS}{\mathbb{S}}
\newcommand{\R}{\mathcal{R}}
\newcommand{\G}{\mathcal{G}}
\newcommand{\C}{\mathcal{C}}
\newcommand{\U}{\mathbb{U}}
\newcommand{\LL}{\mathcal{L}}
\newcommand{\dcup}{\amalg}
\newcommand{\s}{\bm{s}}
\renewcommand{\r}{\bm{r}}
\newcommand{\system}{\textbf{System}}
\newcommand{\<}{\langle}
\renewcommand{\>}{\rangle}
\newcommand{\sub}{\textbf{Sub}}
\newcommand{\ra}{\rightarrow}
\newtheorem{Def}{Definition}%[subsection]
\newtheorem{Pro}{Proposition}%[subsection]
\newtheorem{Que}{Question}%[subsection]
\begin{document}

\title{Generativity and Interactional Effects: an Overview\thanks{The paper is based on Chapter 2 of E. M. Adam's doctoral thesis, see \cite{ADAM:Dissertation}.}}
\author{Elie Adam\thanks{Both authors are with the Laboratory for Information and Decision Systems (LIDS) at the Massachusetts Institute of Technology (MIT).  Emails: \texttt{eadam@mit.edu} and \texttt{dahleh@mit.edu}} \and Munther A. Dahleh}
\date{}

\maketitle

\begin{abstract}
  We propose a means to relate properties of an interconnected system to its separate component systems in the presence of cascade-like phenomena.  Building on a theory of interconnection reminiscent of the behavioral approach to system theory, we introduce the notion of \emph{generativity}, and its byproduct, \emph{generative effects}.  Cascade effects, enclosing contagion phenomena and cascading failures, are seen as instances of generative effects.  The latter are precisely the instances where properties of interest are not preserved or behave very badly when systems interact. The goal is to overcome that obstruction.  We will show how to extract mathematical objects from the systems, that encode their generativity: their potential to generate new phenomena upon interaction.  Those objects may then be used to link the properties of the interconnected system to its separate systems. Such a link will be executed through the use of exact sequences from commutative algebra.
\end{abstract}

\section{Introduction.}

Whenever we deal with the realm of interconnection of systems and interaction-related effects---termed, \emph{interactional effects}---we may be driven by a natural impulse to desire properties of systems, explaining such effects, that are compositional.  By compositional properties, we mean properties that are preserved or behave very well when the systems are interconnected.  Interactional effects of heavy interest, such as contagion effects and cascading failures, arise however exactly because compositionality fails in various aspects.  These phenomena are given away by systemic properties that fundamentally do not behave well under interconnection. We may then fairly expect, when aiming to understand such interactional effects, any non-trivial tangible compositional property to be fundamentally too weak to yield us something of use. But if we back away from the idea of wanting our properties to be compositional, can we recover such compositionality through other means?  This paper proposes such a means to relate properties of an interconnected system to its separate component systems in the presence of cascade-like phenomena.  Properties that are preserved or behave well under interconnection are seen to be related through a special case of that means.

This paper introduces the notion of \emph{generativity}, and its byproduct, \emph{generative effects}.  Cascade effects, enclosing contagion phenomena and cascading failures, are seen as instances of generative effects.  The converse does not have to be true unless one decides to expand the intuition that floats about cascading phenomena.  The key to understanding these effects is that they are \textbf{not} intrinsic to the system.  They rather result from an extrinsic dichotomy, a separation between what is deemed observable in the system---termed, the \emph{phenome}---and what is concealed in the system. Generative effects emerge from an interplay between the phenome and the concealed mechanisms.  This dichotomy is enforced by a map---termed, the \emph{veil}---from a space of systems to a space of phenomes.  The veil partially covers the system leaving the phenome bare, thereby concealing mechanisms in a system.  Generative effects are sustained by the veil whenever the phenome of the interconnected system cannot be explained by the phenome of the separate parts.  The mechanisms of the systems concealed under the veil interact so as to produce new observables.  

Generative effects are thus precisely the instances where the property of interest, the phenome, evolves horrendously under interconnection. The goal of the paper is to overcome that obstruction.  We will show how to extract mathematical objects from the systems, that encode their generativity: their potential to generate new observables.  Those objects may then be used to link the phenome of the interconnected system to its separate systems. Such a link will be executed through the use of exact sequences from commutative algebra.  % 
The horizon goal of such a development is twofold.  We firstly aim to acquire a computational means to evaluate the behavior of systems under the presence of cascading-like phenomena. We secondly aim to gain insight into how systems interact and behave among each others.  Such insight is to be be used first for the (theoretical) analysis of interactive systems, and second for the design of systems that can desirably cope with change.

The theory of generativity may be developed on two levels.  The first is a special level where the systems in play tend to be all subsystems of a fixed system serving as a universe.  The second is a general level where such a universe system is not present.  
We will only be concerned with the special level in this paper. It is much simpler to describe.  One however needs the general level to insightfully arrive at a clear formulation of the problem, and then at a solution.  This paper can then only present a sketch of the solution, rather than the general details of it. This paper will nevertheless illustrate a working example solution in the special level.

This paper begins by a theory of interconnection reminiscent of the behavioral approach to system theory, initiated by J.\ C.\ Willems.  It proceeds to define generative effects, and elaborates examples to develop the intuition of the reader.  In a linear world, generative effects will be associated to a loss of surjectivity. We thus show how to recover such a loss, and link the phenome of the interconnected system to the separate component systems.  As not all worlds are linear, we arrive at a general solution by lifting our problems to linear problems.

On the technical end, interconnection of systems, in a linear world, will be synonymous to short exact sequences.  Generative effects are sustained by a veil if, and only if, exactness is lost once the veil is applied to the systems.  The goal is then to recover this loss of exactness.  We can extract algebraic objects from the systems, that encode their generativity, and mend the nonexact sequence into a long exact one by fitting in the objects appropriately.

\section{Interconnection and interaction of systems.}

We cannot have interactional effects without a notion of interaction or interconnection.  The notion of interconnection in the special theory is rather simple, yet inclusive.  The systems are elements of a set $S$, and interaction is an operation $\vee: S \times S \ra S$ that is:
\begin{itemize}\setlength\itemsep{0in}
  \item[I.1.] Associative i.e., $(s\vee s') \vee s'' = s \vee (s' \vee s'')$
  \item[I.2.] Commutative i.e., $s\vee s' = s' \vee s$
  \item[I.3.] Idempotent i.e., $s \vee s = s$
\end{itemize}
A semilattice $(S,\vee)$ is a set $S$ equipped with an operation $\vee$ that is associative, commutative and idempotent.  The operation $\vee$ is termed the \emph{join} of the semilattice.
      
\begin{Def}  A system is an element of a semilattice $(S,\vee)$. The systems $s$ and $s'$ in $S$ are interconnected to give the system $s \vee s'$, their join, in $S$.
\end{Def}

 A system $s$ is defined to be a subsystem of $s'$ whenever $s \vee s' = s'$.  The system $s = s_1 \vee \cdots \vee s_m$ is then the system that amounts from the interaction, or combination, of the subsystems $s_1, \cdots, s_m$.  There is only a unique way to interconnect two systems, and it is via the $\vee$ operation.

A semilattice $(S,\vee)$ induces a partial order $\leq$ on $S$ obtained by setting $s \leq s'$ if, and only if, $s\vee s' = s'$. Thus $s$ is a subsystem of $s'$ if, and only if, $s \leq s'$. The join $s \vee s'$ is then the least upper bound of $s$ and $s'$.  It is the smallest system $t$ such that $s \leq t$ and $s' \leq t$.

We provide two generic concrete interpretations, and leave other interpretations to examples in future sections.
  
\subsection{The behavioral approach to system theory.}

The behavioral approach to system theory, initiated by J.\ C.\ Willems, begins with the premise that a mathematical model acts as an exclusion law (see e.g., \cite{POL1998} and \cite{WIL2007}).  The phenomenon we wish to model produces events or outcomes that live in a given set $\U$, termed the \emph{universum}. The laws of the model (viewed descriptively) state that some outcomes in $\U$ are possible, while others are not. The model then restricts the outcomes in $\U$ to only those are allowed possible by the laws of the model.  The set of possible outcomes is then called the \emph{behavior} of a model.  We will refrain from using the term model, and replace it by the term system.

\begin{Def}[cf. \cite{POL1998} Section 1.2.1]
  A Willems system is a pair $(\U,\B)$ with $\U$ a set, called the \emph{universum}---its elements are called \emph{outcomes}---and $\B$ a subset of $\U$ called the \emph{behavior}.
\end{Def}

In case we fix a universum $\U$, the set of Willems systems $(\U,\B)$ partially ordered as $(\U, \B) \leq (\U, \B')$ if, and only if, $\B \supseteq \B'$ forms a semilattice $(S,\vee)$.  Interconnection of systems is given by the set-intersection of the behaviors, and corresponds to the join of the defined semilattice.

\begin{Pro}
 If $(\U,\B)$ and $(\U,\B')$ are Willems systems, then their interconnection $(\U, \B \cap \B')$ is given by $(\U, \B) \vee (\U, \B')$. 
\end{Pro}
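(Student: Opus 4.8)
The plan is to lean on the characterization, already recorded in the excerpt, that in a semilattice $(S,\vee)$ the join $s \vee s'$ is nothing but the least upper bound of $s$ and $s'$ for the induced partial order $\leq$. Here the order on Willems systems over the fixed universum $\U$ is reverse inclusion of behaviors, i.e. $(\U,\B) \leq (\U,\B'')$ iff $\B \supseteq \B''$. So it suffices to verify that $(\U, \B \cap \B')$ is the least upper bound of $(\U,\B)$ and $(\U,\B')$ in this poset. I would first note that $(\U, \B\cap\B')$ is a bona fide Willems system at all, since $\B \cap \B' \subseteq \U$ and a behavior is required to be nothing more than a subset of the universum.

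Next I would check the two halves of the least-upper-bound condition. For the ``upper bound'' part: from $\B \cap \B' \subseteq \B$ and $\B \cap \B' \subseteq \B'$ the definition of $\leq$ yields $(\U,\B) \leq (\U, \B\cap\B')$ and $(\U,\B') \leq (\U, \B\cap\B')$. For ``least'': take any Willems system $(\U,\B'')$ with $(\U,\B) \leq (\U,\B'')$ and $(\U,\B') \leq (\U,\B'')$; unwinding $\leq$ this means $\B'' \subseteq \B$ and $\B'' \subseteq \B'$, hence $\B'' \subseteq \B \cap \B'$, that is $(\U, \B\cap\B') \leq (\U,\B'')$. So $(\U,\B\cap\B')$ lies below every common upper bound, hence it is the least one, hence it equals $(\U,\B) \vee (\U,\B')$. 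As a side effect this also certifies that the poset of behaviors under reverse inclusion really is a (join-)semilattice, binary joins existing by the same computation.

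I do not anticipate a real obstacle: the content is just the remark that set-intersection is the meet for $\subseteq$, hence the join for $\supseteq$. The one place to be attentive is the bookkeeping of the direction of $\leq$ --- because $\leq$ is \emph{reverse} inclusion, ``being an upper bound'' of $(\U,\B)$ and $(\U,\B')$ unwinds to ``being contained in both $\B$ and $\B'$,'' not the reverse, and it is easy to flip this by reflex. An alternative, slightly longer route would be to \emph{define} $(\U,\B) \vee (\U,\B') := (\U, \B \cap \B')$ outright, verify axioms I.1--I.3 by transporting associativity, commutativity and idempotence of $\cap$, and then confirm that the order induced by this join is exactly reverse inclusion; but the least-upper-bound argument above is cleaner and uses only the facts the excerpt has already established.
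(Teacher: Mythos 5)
Your proof is correct: the paper states this proposition without proof, treating it as immediate from the definition of the order $(\U,\B)\leq(\U,\B')$ iff $\B\supseteq\B'$, and your least-upper-bound verification (with the reverse-inclusion direction handled correctly in both halves) is exactly the routine argument being left implicit. Nothing to add.
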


The properties of the semilattice will depend on what we allow as possible behaviors.  If we consider all subsets to be possible behaviors, our semilattice $(S,\vee)$ will form a Boolean lattice.  If $\U$ is a vector space, and we consider the linear subspaces of $\U$ to be the possible behaviors, then we tend to get a semilattice that is only a modular lattice.

\subsection{Syntactical systems and descriptions.}\label{Syntax}

Another approach consists of thinking of an element of the semilattice $(D,\vee)$ as a description of a system.  Descriptions are combined through the join operation of the semilattice. The description may be in the form of a text, an equation, a diagram or any syntactical piece one might wish for.  Inasmuch as the solution set of a set of algebraic equations does not depend on their order, we have $d\vee d' = d' \vee d$.  Inasmuch as redundant algebraic equations produce no effects on the solution set, we have $d \vee d = d$.  As combining descriptions also tends to be associative, we arrive at the defining axioms of the semilattice.

\paragraph{As a simplified formalization.} 
 Let $\Sigma$ be a finite set, termed \emph{the alphabet}. We define $\Sigma^*$ to be the set of finite strings, words, or sequences, made up of elements of $\Sigma$.  If $\Sigma = \{a,b\}$, then $\Sigma^* = \{\emptyset, a, b, aa, ab, ba, bb, aaa, aab, \cdots\}$.  A description, termed \emph{language}, is a subset of $\Sigma^*$.  As not all languages may provide meaningful descriptions for our systems, we may a pick a subset $\LL$ of them.  We order $\LL$ by inclusion, and get a partial order.  We will assume that every pair of languages in $\LL$ admits a least upper-bound.  We would have then obtained a semilattice $(\LL,\vee)$. If $d$ and $d'$ are descriptions of systems in $\LL$, then $d \vee d'$ denotes the smallest language in $\LL$ that contains both $d$ and $d'$.  

 If $d$ and $d'$ are descriptions for the same system, then we expect $d \vee d'$ to be a redundant description of the same system.  Let us assume that $\LL$ is finite.
If a system admits multiple descriptions of it in $\LL$, we may take the join of all those descriptions to arrive at the maximum language in $\LL$ that describes the system. 
The collection of such maximal languages forms a subsemilattice $(S,\vee)$ of $\LL$.  The semilattice $S$ will then be our semilattice of systems. In case every system admits a unique description in $\LL$, the semilattice of systems $S$ is just $\LL$.

\section{Generativity and Interactional Effects.}

A theory of interconnection by itself will not be enough to produce interactional effects. Interconnecting two systems only gives an interconnected system.  We thus view interactional effects as fundamentally not intrinsic to the system.  They will only emerge once we set our expectation for what is deemed observable in a system.  

Let $\system$ be a semilattice of systems.  We define the phenome as that which we choose to explicitly observe from an arbitrary system in that class. A phenome may be either a property, a feature, a consequence, or even a subsystem of the system.  We generally, often non-trivially, arrive at a phenome by forgetting irrelevant information from the system. We may lightly define a phenome as the image of a system under a map $\phi: \system \ra P$ of sets.  We may choose to forget nothing at all, and get the identical whole system as a phenome. The set $P$ would then be $\system$, and $\phi$ would be the identity map. We may also choose to forget everything, and get \emph{nothing} as a phenome of a system. The set $P$ would then be a singleton set $\{*\}$, and $\phi$ would be the unique map $\system \ra \{*\}$.  Thus varying what and how much we forget from a generic system of $\system$ gives us different phenomes for the same system.

We are interested in understanding how the phenome of a certain system changes when the system is modified.  More generally, we want to understand how the phenome of systems changes when systems interact.  We are particularly interested in the situations where the phenome of the interconnected system cannot be explained by the phenome of the separate systems.  Once a phenome is declared, everything we intentionally forget from the system is declared to be concealed.  Although concealed features of a system may not emerge by themselves into the phenome, they are likely to interact with phenomes or concealed mechanisms from other systems to affect the observable phenome. Such situations are characteristic of the contagion behavior observed in societal settings and of cascading failure in various infrastructural systems.  The parts of the systems that are declared concealed may interact so as to produce more than what is expected from what is observable.  We term the effects leading to such unexplained phenomena as \emph{generative effects}.  The concealed part of the system is irrelevant to the system's phenome, but it has the potential to interact with either the phenome or concealed parts of other systems.  We term that potential \emph{generativity}.

\subsection{Veils and generativity.}

We arrive at a phenome by forgetting things from a system, by concealing them under a veil. We may then well think of a phenome as a simplified system.  The set of phenomes then  forms a semilattice $(P,\vee)$.  The join $\vee$ naturally induces a partial order $\leq$ on $P$.

\begin{Def}[Veil]
 A veil on $\system$ is a pair $(P,\phi)$ where $(P,\vee)$ is a semilattice of phenomes, and $\phi : \system \ra P$ is a map such that:
\begin{itemize}
  \item[V.1.] The map $\phi$ is order-preserving, i.e., if $s \leq s'$, then $\phi s \leq \phi s'$.
  \item[V.2.] Every phenome admits a simplest system that explains it, i.e., the set $\{s: p \leq \phi s\}$ has a (unique) minimum element for every phenome $p$.
\end{itemize}
\end{Def}
The veil is intended to hide away parts of the system, and leave other parts, the phenome, of the system bare and observable.  The axiom V.1 indicates that concealing a subsystem of a system may only yield a subphenome of the phoneme of the system.  The axiom V.2 indicates that everything one observes can be completed in a simplest way to something that extends under the veil. Generative effects occur precisely when one fails to explain the happenings through the observable part of the system.  In those settings, the things concealed under the veil would have interacted and produced observable phenomes. 

\begin{Def}[Generative Effects]
 A veil $(P,\phi)$ is said to sustain generative effects if $\phi(s \vee s') \neq \phi (s) \vee \phi(s')$ for some $s$ and $s'$.
\end{Def}

Different veils may be chosen for the same semilattice of systems.  Some will sustain generative effects and some will not.  For instance, both veils $(\system,id)$ and $(\{*\},*: \system \ra \{*\})$ do not sustain generative effects at all.  All that can be observed is explained by what is already observed.  Thus the standard intuition for systems exhibiting cascading phenomena, or contagion effects, does not stem from a property of a system.  It is rather the case that the situation admits a highly suggestive phenome and highly suggestive veil that sustains such effects.  Those effects are thus properties of the situation.  Should we change the veil, we may either increase those effects, diminish them or even make them completely go away. Such interactional effects depend only on what we wish to observe.

Our aim is then to answer the following question:
\begin{Que}
 Given a veil $(P,\phi)$ that sustains generative effects, how can we non-trivially characterize or express $\Phi(s \vee s')$ through separate information on $s$ and $s'$ (and potentially a common system $s \wedge s'$)?
\end{Que}
How can we relate the behavior of the interconnected system to its separate components?

\section{Some examples.}\label{Examples}

The aim of this section is to develop the reader's intuition on veils and generative effects.  We provide five examples. % 

\subsection{Generativity is not intrinsic to the system.}\label{warmupExample} 

This example deals with a very simple---if not the simplest---instance of generative effects.  Let $U$ be a finite set.  Two proper subsets of $U$ are not equal to $U$ by definition.  Their union can however be equal to $U$.  If we set up a veil that keeps only whether a given subset of $U$ is equal to $U$ or not, the veil will then sustain generative effects.  Formally, let $\{*\}$ be a one point set. The semilattice $\system$ of systems is $(2^U,\cup)$.  The veil is $(2^{\{*\}},\phi)$ where $\phi S = \emptyset$ if $S \subsetneq U$ and $\phi S = \{*\}$ if $S = U$.  Although two subsets are not separately equal to $U$, their union can be $U$.  If $S^c$ is the complement set of $S \subsetneq U$, then:
\begin{equation*}
  \phi ( S \cup S^c) \neq \phi S \cup \phi S^c
\end{equation*}

This toy instance may be complicated by replacing $2^U$ by any semilattice $L$. Let $s,s' \in L$ be non-comparable elements, and define $\phi$ to map an element $t$ to $\{*\}$ if $t \geq s\vee s'$, and to $\emptyset$ otherwise. Such a defined veil trivially sustains generative effects for any semilattice of systems with two non-comparable elements.  It is then ill-posed to talk about a system exhibiting generative effects.  It is a property of the perspective, i.e. the veil, we choose.

\subsection{Generativity in the behavioral approach.}  

Let us consider a mega-system comprised of an interacting mixture of infrastructures (e.g., power, transportation, communication), markets (e.g., prices, firms, consumers), political entities and many individuals.  We are interested in understanding the evolution of the behavior of a subsystem of this mega-system, as changes are effected into the mega-system. Of courses changes directly effected onto the subsystem modifies the behavior. It is also the case that seemingly non-related changes causes a shift in the behavior by a successive chain of events.

Let $\bM$, $\bS$ and $\bR$ be sets such that $\bM = \bS \times \bR$. Following the behavioral approach terminology we will have $\bM$, $\bS$ and $\bR$ be the outcome space, or universum, of the \textbf{m}ega-system, the \textbf{s}ubsystem, and the \textbf{r}est (or remainder) in the mega-system that is not the subsystem of interest.  The systems will then be subsets of those universa.  Specifically, the sets $\M \subseteq \bM$, $\S \subseteq \bS$ and $\R \subseteq \bR$ denote the behavior of the mega-system, the subsystem and the rest, respectively. Although $\S$ is a subsystem of $\M$, the set $\S$ is not a subset of $\M$, but is rather a projection (or a quotient) of $\M$ onto the $\bS$-coordinate.  %
 A change in our mega-system, following the behavioral approach, is then depicted as an intersection with a change $\C \subset \bM$.  If we denote by $\pi: \bM \rightarrow \bS$ the projection onto the $\bS$-coordinate, then $\pi \M = \S$ and we generally observe:
\begin{equation*}
  \pi (\M \cap \C ) \neq \pi(\M) \cap \pi(\C).
\end{equation*}
The change $\C$ affects the subsystem $\S$ through interactions within $\R$.  If all the interactions were confined to be within $\R$, then we would have had equality for sure. In such a case, changes outside of $\S$ do not affect $\S$.

\subsubsection*{Mathematically.} The systems lattice is $(2^\bM, \cap)$, and the lattice of phenomes is then $(2^\bS, \cap)$.  The veil $\pi$ is then the projection of $\M \subseteq \bM$ onto the $\bS$-coordinate, i.e.,
\begin{equation*}
  \pi \M = \{ s \in \bS : (s,r) \in \M\}.
\end{equation*}
The map $\pi$ preserves the partial order, and every phenome $\S \subseteq \bS$ admits $\S\times \bR$ as a simplest system explaining it in $2^\bM$. The veil $\pi$ also sustains generative effects.
As a simple instance, consider $\bS = \{s,\s\}$ and $\bR = \{r, \r\}$. Let our mega-system be $\M = \{(s,r), (\s,\r)\}$, where all feasible outcomes have matched type-faces. Our system $\S$ is then $\pi \M = \{s,\s \}$.  We will now effect the following change $\C = \{(s,\r),(\s,\r)\}$, where only bold-faced $\r$ is allowed.  We then observe an inequality.  The set $\pi (\M \cap \C ) = \{\s\}$ is different than the set $\pi(\M) \cap \pi(\C) = \{s,\s\}$.  The change \emph{propagated} through $\R$ into the behavior of $\S$.

\subsubsection*{Generally.}
The universa $\bM$, $\bS$ and $\bR$ may be equipped with various mathematical structures, e.g., linear structures making them vector space.  The behaviors become subspaces of their corresponding universum.  This example may then be enriched as needed. 

\subsection{Deduction and consequences.} 

As a simplified case of this example, we will have each system consist of a three node graph. Each node in the graph can be colored either black or white, and is assigned an integer $k$ as a threshold. All nodes are white initially. A node then becomes black, if at least $k$ of its neighbors are black. Once a node is black it remains black forever. In this setting, the order of update does not affect the final set of black.  For instance, let $A$ and $B$ denote the systems on the left and right, respectively.

\begin{center}
\begin{tikzpicture}
  [scale=.3,auto=center,thick,every node/.style={circle,draw=black!80!white,scale=1.2}]
  \node (n1) at (0,0) {2};
  \node (n2) at (3,5)  {3};
  \node (n3) at (6,0)  {1};

  \node (n11) at (12,0) {0};
  \node (n22) at (15,5)  {2};
  \node (n33) at (18,0)  {2};

    \foreach \from/\to in {n1/n2,n2/n3,n1/n3,n11/n22,n22/n33,n11/n33}
    \draw (\from) -- (\to);
\end{tikzpicture}
\end{center}

Given our rule above, a threshold of $0$ indicates that a node \emph{automatically} becomes black. If no threshold of $0$ exists, then necessarily all nodes will remain white. Two systems interact by combining their evolution rules. The system $A\vee B$ corresponds to the graph that keeps on each node the minimum threshold between that of $A$ and $B$:
\begin{center}
\begin{tikzpicture}
  [scale=.3,auto=center,thick,every node/.style={circle,draw=black!80!white,scale=1.2}]
  \node (n1) at (0,0) {0};
  \node (n2) at (3,5)  {2};
  \node (n3) at (6,0)  {1};
  
    \foreach \from/\to in {n1/n2,n2/n3,n1/n3}
    \draw (\from) -- (\to);
\end{tikzpicture}
\end{center}

We can forget the evolution rules that are prone to interact with others by keeping from the systems only the set of final black nodes.  Indeed, every set of black nodes $S$ corresponds to a simplified system having a threshold of $0$ on the nodes in $S$ and a threshold of $\infty$ on the nodes not in $S$.  Let us denote by $\phi(A)$ and $\phi(B)$ the set of black nodes of A and B respectively.  Then the set $\phi(A)$ is empty, and the set $\phi(B)$ contains the left node.  The combination of the phenomes $\phi(A)$ and $\phi(B)$ corresponds to the union $\phi(A) \cup \phi(B)$. Such a combination may be equivalently thought of as the final set of black nodes corresponding to $\phi(A)$ and $\phi(B)$ when viewed as simplified systems.  We then arrive at the inequality:
\begin{equation*}
 \phi(A \vee B) \neq \phi(A) \cup \phi(B)
\end{equation*}
When $A$ and $B$ are combined, the left black node in $B$ interacts with the rules of $A$ to color the right node black. Both the left and the right nodes then interact with the rules of $B$ to color the middle node black.  This effect is encoded in the inequality.

\subsubsection*{Generally.}

The above case may be trivially generalized to arbitrary graphs and thresholds.  The essence of it however lies in the following idea. Informally, let $E_1, \cdots, E_n$ be statements. A statement may be thought of as expression that may either be proven or unproven to be true.  It is helpful to think of statements as theorems.  The statements however may be related.  If some are proved to be true, they may constitute a proof for other statements to be true.  The systems will then consist of a set of premises and implications.  The premises may be thought of as axioms, and the implications may be thought of as inference rules.  Some statements are initially assumed true, and the implications allow us to prove more statements to be true than initially held.  Two systems are combined by combining their premises and the implication rules.  It is intuitively clear that the interaction of implications from the systems would allows a far more powerful deduction than what is possible without interaction.

\subsubsection*{Mathematically.}

The systems and their properties are treated in \cite{ADAM:AlgebraCascadeEffects}.  Let $S$ be a set. The systems can be identified with maps $f : 2^S \ra 2^S$ satisfying:
\begin{itemize}\setlength\itemsep{0in}
\item[A.1] For all $A\subseteq S$, we have $A \subseteq fA$.
\item[A.2] If $A \subseteq B$, then $f A \subseteq f B$.
\item[A.3] For all $A\subseteq S$, we have $f f A = f A$.
\end{itemize}

We can order the maps by $f \leq g$ if, and only if, $fA \leq gA$ for all $A \subseteq S$. We then obtain a semilattice $\LL$.  If we define $\phi : \LL \ra 2^S$ to map $f$ to its least fixed-point $f(\emptyset)$, then $(2^S,\phi)$ defines a veil. This veil can be shown to sustain generative effects.

The lattice $2^S$ may also be replace by any other lattice. Of course, one may also consider subsemilattices of $\LL$ for additional variations.

\subsection{Reachability problems.}

This example may viewed through the lens of reachability problems.  A system loosely consists of a collection of states along with internal evolution dynamics.  The dynamics of system dictate whether the system may evolve from state $a$ to state $b$.  Two systems may be combined by allowing their dynamics to interact.  The interaction of dynamics would then allow the system to reach more states than what is separately reachable.

As a simplified case of this example, we will have each system consist of a digraph over four nodes. For instance, let $S$ and $S'$ denote the systems on the left and right, respectively.

\begin{center}
\begin{tikzpicture}
  [scale=0.3,auto=center,thick,->,every node/.style={circle,draw=black!80!white,scale=0.7}]

  \node (n1) at (0,0) {1};
  \node (n2) at (0,4)  {2};
  \node (n4) at (4,0)  {4};  
  \node (n3) at (4,4) {3};

  \node (n11) at (10,0) {1};
  \node (n22) at (10,4)  {2};
  \node (n44) at (14,0)  {4};  
  \node (n33) at (14,4) {3};
  
  \foreach \from/\to in {n1/n2,n3/n4,n22/n33,n44/n11}
    \draw (\from) -> (\to);
\end{tikzpicture}
\end{center}

\noindent Two systems $S$ and $S'$ interact by combining their edges, to yield $S \cup S'$.
\begin{center}
\begin{tikzpicture}
  [scale=0.3,auto=center,thick,->,every node/.style={circle,draw=black!80!white,scale=0.7}]

  \node (n1) at (0,0) {1};
  \node (n2) at (0,4)  {2};
  \node (n4) at (4,0)  {4};  
  \node (n3) at (4,4) {3};
  
  \foreach \from/\to in {n1/n2,n3/n4,n2/n3,n4/n1}
    \draw (\from) -> (\to);
\end{tikzpicture}
\end{center}

The phenome of the system corresponds to the set of pairs $a\rightarrow b$ where $b$ can be reached from $a$ through a directed path. We can ignore the cases $a \rightarrow a$ as they belong to the phenome of every possible system.  Then phenome $\phi(S)$ of $S$ corresponds to $\{1 \rightarrow 2, 3 \rightarrow 4\}$, while the phenome $\phi(S')$ of $S'$ corresponds to $\{2 \rightarrow 3, 4 \rightarrow 1\}$.  The phenomes $\phi(S)$ and $\phi(S')$ are combined via set union to yield $\phi(S) \cup \phi(S')$. One can then see that:
\begin{equation*}
 \phi(S \cup S') \neq \phi(S) \cup \phi(S')
\end{equation*}
Generative effects are sustained. Indeed $\phi(S) \cup \phi(S')$ contains only four elements, whereas $\phi(S \cup S')$ contains all possible pairs. The edges in the combined graph are aligned to create paths that did not exist separately.

\subsubsection*{Generally + Mathematically.}

Let $S$ be a set.  Denote by $\textbf{Rel}(S)$ and $\textbf{Tran}(S)$ the set of relations on $S$ and transitive relations on $S$, respectively.  The set $\textbf{Rel}(S)$ froms a semilattice by defining the join to be union of sets, and $\textbf{Tran}(S)$ forms a semilattice by defining the join ot be the union of sets followed by the transitive closure.  The systems lattice is then $\textbf{Tran}(S)$, the phenomes lattice is $\textbf{Rel}(S)$ and our veil will be defined to forget the transitive property. The map $\phi$ of the veil will be the order-preserving inclusion from $\textbf{Tran}(S)$ to $\textbf{Rel}(S)$.  The defined veil sustains generative effects.

\subsubsection*{Important remark.}  The semilattice of phenomes does not have to be \emph{smaller} than the semilattice of systems.  The veil can be devised to forget properties, and thus the phenome consists of systems that do not necessarily have the forgotten property.  The space of phenomes then trivially contains the systems that do have the forgotten property.  Throwing away information from the system, leaves us with a system with less information.  But if that information was restrictive, then the space of phenomes will be \emph{greater} than that of the systems.

\subsection{Words, languages and grammars.}  % Word problems

Let $\Sigma = \{a,b\}$ be an alphabet set.  A word over $\Sigma$ is a string consisting of a finite sequence of letters in $\Sigma$, e.g., $abba$, $a$, $abaab$, etc.
A system will consist of a collection of transformation rules $u \leftrightarrow v$ where $u$ and $v$ are words.  Starting from a given word $w$, a system that possesses rule $u \leftrightarrow v$ may substitute any appearance of $u$ as a subword in $w$ by a subword $v$, and vice versa.  Two systems are combined by taking the union of the rules. Fixing an initial word $w$, the phenome we are interested in is the set of words that a system may transform it to. Indeed, the scope may be much greater when systems are combined than what can be separately achieved. Generative effects will be sustained.

As a concrete instance, let the system $S$ be the rules $aa \leftrightarrow a$ and $bb \leftrightarrow b$, and the system $S'$ be the rules $ab \leftrightarrow ba$. Let us pick (and fix) $w$ to be $ab$. The phenome $\phi(S)$ is then the set of words having all $a$s on the left and all $b$s on the right. The phenome $\phi(S')$ is only the set $\{ab, ba\}$.

As two systems interact by putting their relations in common, the system $S \vee S'$ is the rules $aa \leftrightarrow a$, $bb \leftrightarrow b$ and $ab \leftrightarrow ba$. The combination of phenomes is the set union $\phi(S) \cup \phi(S')$. Generative effects are sustained as:
\begin{equation*}
  \phi(S \vee S') \neq \phi(S) \cup \phi(S').
\end{equation*}
Indeed, the phenome $S \vee S'$ contains all strings containing at least one $a$ and one $b$.

\subsubsection*{Mathematically.}

This example falls within the world of languages and grammars. We will unfortunately neglect the algebraic structure of the problem, for the purposes of this example. Let $\Sigma^*$ be the set of all words, and let $w$ be a fixed word. Every system corresponds to an equivalence relation on $\Sigma^*$. The phenome corresponds to the equivalence class containing the word $w$.  Combination of systems corresponds to closing the union relation under transitivity, and combination of phenomes is set union. Both the systems and the phenomes form semilattices.  The veil that reads the equivalence class of $w$ can be shown to sustain generative effects.

\section{The problem and the goal.}

Our aim is to understand the evolution of the phenome as systems interact.  The inequality in generative effects hinders such an understanding.  Since we are precisely interested in such effects, we are bound to live with that inequality. The question then becomes as to how we go around it.  We are interested in the phenome of the interconnected system.  We may obviously, if tractable, combine the systems and read the phenome. Such an approach, however, will yield no insight at all into the problem.

The inequality of generative effects tells us that some features of the combined phenome cannot be explained by the separate ones.  Thus we still need to extract additional information from the system.  We will then extract a mathematical object that encodes the generativity of a system: the potential of a system to produce changes in the phenome.  We can then use these objects to relate the phenome of the combined system to that of the separate subsystems. Thus, those object will summarize the required information needed to go around the inequality.  But most importantly, in most cases, the system cannot be reconstructed from the objects and the phenomes.  We are then distilling what it is that makes systems produce those effects.  In the general theory, these objects may be seen as universal in a certain sense.  The development in this paper will however be oblivious to any property those objects ought to possess.

The key to the solution is that the inequality in generative effects means that some features of the combined system's phenome cannot be \emph{explained} by the separate subsystems' phenome.

\subsection{Destroying surjectivity.}

A veil $(P,\phi)$ is said to sustain generative effects if $\phi(s \vee s') \neq \phi (s) \vee \phi(s')$ for some $s$ and $s'$.  The phenome of the separate systems is, thus, unable to explain the phenome of the interconnected system.  This inability will be formally understood as a loss of surjectivity of a certain map.  The loss of surjectivity will be key to the solution.

Let us suppose that both the systems and the phenomes are sets. More precisely, we will have $\system = (2^U,\cup)$ and $P = (2^V,\cup)$.  If $S$ and $S'$ are subsets of $U$, then there are canonical injective maps $i: S \ra S \cup S'$ and $i' : S' \ra S \cup S'$.  Let $S \dcup S'$ denote the disjoint union of $S$ and $S'$, and define $i \dcup i' :  S \dcup S' \ra S \cup S'$ to be the map $i$ on $S$ and $i'$ and $S'$.

\begin{Pro}\label{Pro:surj}
 For every every $S$ and $S'$, the map $i \dcup i'$ is a surjective map.
\end{Pro}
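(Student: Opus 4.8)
The plan is to simply unwind the definitions: the claim reduces to the observation that $S \cup S'$ is, by the very definition of set-theoretic union, covered by the images of the canonical injections $i$ and $i'$, and that $S \dcup S'$ surjects onto that cover by the way $i \dcup i'$ is built.

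First I would fix an arbitrary element $x \in S \cup S'$ and recall that membership in the union means $x \in S$ or $x \in S'$. I would then split into the two (non-exclusive) cases. If $x \in S$, pick the designated representative of $x$ in the first summand of the disjoint union --- writing $S \dcup S' = (S \times \{0\}) \cup (S' \times \{1\})$, this is the pair $(x,0)$. By the definition of $i \dcup i'$ as the map that equals $i$ on the $S$-summand, its value at $(x,0)$ is $i(x)$; and since $i \colon S \hra S \cup S'$ is the canonical inclusion, $i(x) = x$. Hence $x$ lies in the image of $i \dcup i'$. The case $x \in S'$ is symmetric, using the representative $(x,1)$ and the inclusion $i'$. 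As $x$ was arbitrary, $i \dcup i'$ is onto, which is the assertion of Proposition~\ref{Pro:surj}.

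I do not expect any real obstacle; the only point that deserves care is the bookkeeping with the chosen model of the disjoint union, together with the (immediate) check that $i$ and $i'$ act as the identity on underlying elements, so that their images \emph{jointly exhaust} $S \cup S'$ and not merely some sub- or superset. I anticipate the purpose of this proposition in the development is not its depth but its framing: it isolates the fact that, \emph{before} any veil is applied, the assembly map from the separated pieces to the interconnected system is already surjective, so that the loss of surjectivity studied afterwards can be attributed entirely to the veil $\phi$ rather than to the combinatorics of interconnection itself.
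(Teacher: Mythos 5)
Your argument is correct and is essentially the paper's own proof, which simply observes that every element of $S \cup S'$ lies in $S$, in $S'$, or in both; your version only spells out the bookkeeping with the disjoint-union representatives. No further comment is needed.
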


\begin{proof}
 An element of $S\cup S'$ belongs to either $S$, $S'$ or both.
\end{proof}

On another end, as $\phi S \subset \phi(S \cup S')$ and $\phi S' \subset \phi(S \cup S')$, we get canonical injective maps $\phi i :  \phi S \ra \phi(S \cup S')$, and $\phi i' : \phi S' \ra \phi(S \cup S')$.  The map $\phi i \dcup \phi i'$ need not always be surjective.

\begin{Pro}\label{Pro:surjG}
 The veil $\phi$ sustain generative effects if, and only if, the map $\phi i \dcup \phi i' : \phi S \dcup \phi S' \ra \phi(S \cup S')$ is not surjective for some $S$ and $S'$.
\end{Pro}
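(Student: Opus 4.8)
The plan is to unwind both sides of the equivalence into a concrete statement about membership, so that the whole claim reduces to a tautology about unions. First I would recall that in the setting at hand $\system = (2^U,\cup)$ and $P = (2^V,\cup)$, so the join on $P$ is just set-union; hence ``$\phi$ sustains generative effects'' means $\phi(S\cup S') \neq \phi(S)\cup\phi(S')$ for some $S,S'$. Since $S\subseteq S\cup S'$ and $\phi$ is order-preserving (axiom V.1), we always have $\phi(S)\subseteq\phi(S\cup S')$ and $\phi(S')\subseteq\phi(S\cup S')$, and therefore $\phi(S)\cup\phi(S')\subseteq\phi(S\cup S')$. So the inequality $\phi(S\cup S')\neq\phi(S)\cup\phi(S')$ is equivalent to the strict inclusion $\phi(S)\cup\phi(S')\subsetneq\phi(S\cup S')$, i.e. to the existence of an element of $\phi(S\cup S')$ lying in neither $\phi(S)$ nor $\phi(S')$.

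Next I would analyze the image of the map $\phi i \dcup \phi i' : \phi S \dcup \phi S' \ra \phi(S\cup S')$. By construction this map sends the copy of $\phi S$ via the inclusion $\phi i$ and the copy of $\phi S'$ via $\phi i'$, and since $\phi i,\phi i'$ are the canonical inclusions into $\phi(S\cup S')$, the image of $\phi i\dcup\phi i'$ is exactly $\phi(S)\cup\phi(S')$ as a subset of $\phi(S\cup S')$. Consequently the map is surjective for a given pair $(S,S')$ if and only if $\phi(S)\cup\phi(S') = \phi(S\cup S')$, and it fails to be surjective precisely when $\phi(S)\cup\phi(S')\subsetneq\phi(S\cup S')$.

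Combining the two reductions finishes the argument: for each fixed pair $(S,S')$, the statement ``$\phi(S\cup S')\neq\phi(S)\cup\phi(S')$'' and the statement ``$\phi i\dcup\phi i'$ is not surjective'' are literally the same statement (both say $\phi(S)\cup\phi(S')\subsetneq\phi(S\cup S')$), using only V.1 for the easy inclusion. Quantifying over pairs gives the claimed ``if and only if''. I do not expect any genuine obstacle here; the only point that needs a line of care is verifying that the image of $\phi i\dcup\phi i'$ is the union $\phi(S)\cup\phi(S')$ and not something smaller — this is immediate from the definition of the codisjoint map but is the one place where one should be explicit, since it is exactly the analogue, under $\phi$, of Proposition \ref{Pro:surj}.
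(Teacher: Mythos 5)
Your proposal is correct and follows essentially the same route as the paper: both reduce the statement to the observation that the image of $\phi i \dcup \phi i'$ is exactly $\phi(S)\cup\phi(S')$ inside $\phi(S\cup S')$ (which always contains it, by V.1), so failure of surjectivity is precisely the strict inclusion $\phi(S)\cup\phi(S')\subsetneq\phi(S\cup S')$, i.e.\ generative effects. Your version just makes explicit, via the image computation, what the paper obtains by invoking Proposition~\ref{Pro:surj} in one direction and a direct preimage argument in the other.
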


\begin{proof}
 Generative effects are not sustained if, and only if, $\phi(S \cup S') = \phi(S) \cup \phi(S')$ for all $S$ and $S'$.  If $\phi(S \cup S') = \phi(S) \cup \phi(S')$, then $\phi i \dcup \phi i' : \phi S \dcup \phi S' \ra \phi(S \cup S')$ is surjective by Proposition \ref{Pro:surj}.  Conversely, if $\phi i \dcup \phi i'$ is not surjective, then some element in $\phi(S \cup S')$ does not admit a preimage in $\phi S \dcup \phi S'$.  Therefore, $\phi(S \cup S')$ strictly contains $\phi(S) \cup \phi(S')$, and generative effects are sustained.
\end{proof}

Generative effects then occur when there are points in $\phi (S \cup S')$ that do not admit preimages in either $\phi S$ or $\phi S'$.

\section{The problem, in a world that is linear.}

We may push the findings further if we equip our setting with more structure.  We will have both the systems and the phenomes be vector spaces.  If $V$ is a vector spaces, we define $\sub(V)$ to be the lattice of the subspaces of $V$.  The join of $A$ and $B$ in $\sub(V)$ is $A + B$, the linear span of $A$ and $B$.  Let $V$ and $W$ be vector spaces, we consider a veil $\phi : \sub(V) \ra \sub(W)$.

As a concrete example, one may consider $\phi: \sub(\mathbb{R}^n) \ra \sub(\mathbb{R}^{n-1})$ obtained by intersecting a subspace of $\mathbb{R}^n$ by a fixed hyperplane $H$.  One may check that $(\sub(\bR^{n-1}),\phi)$ is a veil for $\sub(\mathbb{R}^n)$ that sustains generative effects for every hyperplane.

If $S$ and $S'$ are subspaces of $V$, we then have two injective linear maps $i : S \ra S + S'$ and $i' : S' \ra S+S'$.  We can then form a linear map $i - i' : S\oplus S' \ra S + S'$.

\begin{Pro}
  For every $S$ and $S'$, the map $i - i'$ is a surjective map.
\end{Pro}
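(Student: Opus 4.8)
The plan is to show that every element of $S + S'$ lies in the image of the linear map $i - i' : S \oplus S' \to S + S'$. This is the direct linear analogue of Proposition \ref{Pro:surj}, with the map $i \dcup i'$ on disjoint unions replaced by the map $i - i'$ on the direct sum. The sign is a cosmetic choice (one could equally use $i + i'$); it is there to make the map fit into a short exact sequence $0 \to S \cap S' \to S \oplus S' \xrightarrow{i-i'} S + S' \to 0$ later on, but surjectivity does not depend on it.

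First I would recall what the maps are: $i$ is the inclusion $S \hookrightarrow S + S'$ and $i'$ is the inclusion $S' \hookrightarrow S + S'$, both regarded as maps landing in the common space $S + S'$. The map $i - i'$ sends a pair $(x, y) \in S \oplus S'$ to $i(x) - i'(y) = x - y \in S + S'$, where the subtraction takes place inside $V$ (equivalently inside the subspace $S + S'$). One checks immediately that this is linear, being the difference of the two linear maps $(x,y) \mapsto x$ and $(x,y) \mapsto y$ precomposed with the projections $S \oplus S' \to S$ and $S \oplus S' \to S'$.

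Next, for surjectivity, take an arbitrary $w \in S + S'$. By the definition of $S + S'$ as the linear span (join) of $S$ and $S'$, we can write $w = x + y$ with $x \in S$ and $y \in S'$. Then the pair $(x, -y) \in S \oplus S'$ satisfies $(i - i')(x, -y) = i(x) - i'(-y) = x + y = w$. Hence $w$ is in the image, and since $w$ was arbitrary the map is surjective. (If one prefers the convention $w = x + y$ with $y \in S'$, simply feed in $(x,-y)$; the only content is that elements of the span decompose as sums of an element of $S$ and an element of $S'$.)

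There is essentially no obstacle here: the statement is a one-line consequence of the definition of the join in $\sub(V)$ as $A + B$. The only thing worth flagging is bookkeeping about the sign and about which summand carries the minus sign, so that the formula is consistent with whatever short exact sequence the paper builds next; but this is purely notational. If one wanted to be slightly more highbrow, one could note that $S + S'$ is by definition the pushout (equivalently, since these are subobjects of $V$, the join) of $S \leftarrow S \cap S' \to S'$ in the category of vector spaces, and that the pushout is computed as the cokernel of $S \cap S' \to S \oplus S'$, which is exactly the image of $i - i'$; surjectivity onto $S+S'$ is then immediate. But the elementary argument above is the cleanest.
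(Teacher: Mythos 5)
Your proof is correct and is essentially the paper's own argument: the paper simply notes that every element of $S+S'$ can be written as $a - (-a')$ with $a \in S$ and $-a' \in S'$, which is exactly your decomposition $w = x + y$ hit by the pair $(x,-y)$.
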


\begin{proof}
 Every element of $S + S'$ can be written in the form $a - (-a')$ with $a \in S$ and $-a' \in S'$.
\end{proof}

On another end, as $\phi S \subset \phi(S + S')$, and $\phi S' \subset \phi(S + S')$ we get linear maps $\phi i :  \phi S \ra \phi(S + S')$, and $\phi i' : \phi S' \ra \phi(S + S')$.  The map $\phi i - \phi i'$ need not always be surjective.

\begin{Pro}\label{Pro:surjA}
 The veil $\phi$ sustain generative effects if, and only if, the map $\phi i - \phi i' : \phi S \oplus \phi S' \ra \phi(S + S')$ is not surjective for some $S$ and $S'$.
\end{Pro}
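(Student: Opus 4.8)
The plan is to mirror the argument of Proposition~\ref{Pro:surjG}, with the disjoint union replaced by the direct sum and set union replaced by linear span. The first step is to pin down the image of the map $\phi i - \phi i'$. Since $\phi i$ is the restriction to $\phi S$ of the inclusion $\phi S \hra \phi(S+S')$, and similarly for $\phi i'$, an element of the image has the form $a - a'$ with $a \in \phi S$ and $a' \in \phi S'$, viewed inside $\phi(S+S')$; conversely every element of $\phi S + \phi S'$ is of this shape (replace $a'$ by $-a'$). Hence $\im(\phi i - \phi i') = \phi S + \phi S'$ as a subspace of $\phi(S+S')$. So $\phi i - \phi i'$ is surjective if and only if $\phi S + \phi S' = \phi(S+S')$.

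Next I would record that one inclusion is automatic: because $S \leq S+S'$ and $S' \leq S+S'$ in $\sub(V)$, axiom V.1 (order-preservation of the veil) gives $\phi S \leq \phi(S+S')$ and $\phi S' \leq \phi(S+S')$, hence $\phi S + \phi S' \subseteq \phi(S+S')$ for every $S$ and $S'$. Combining with the previous step, the map $\phi i - \phi i'$ is surjective precisely when this inclusion is an equality. Since the join in $\sub(V)$ is the linear span and likewise in $\sub(W)$, the equality $\phi S + \phi S' = \phi(S+S')$ is exactly the statement $\phi(S \vee S') = \phi S \vee \phi S'$.

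It then only remains to negate over the quantifier: $\phi$ does not sustain generative effects iff $\phi(S \vee S') = \phi S \vee \phi S'$ for all $S, S'$, iff $\phi i - \phi i'$ is surjective for all $S, S'$; taking contrapositives, $\phi$ sustains generative effects iff $\phi i - \phi i'$ fails to be surjective for some $S, S'$, which is the claim. (The surjectivity of the unveiled map $i - i'$, established in the preceding proposition, is what guarantees that in the non-generative case the veiled map really does hit all of $\phi(S+S')$, but in the streamlined argument above this is subsumed by the direct computation of the image.)

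The content here is routine; the only place that needs a little care — the ``main obstacle'', such as it is — is the bookkeeping around what the join operations are on both sides (linear span, not set union) and making sure the automatic inclusion $\phi S + \phi S' \subseteq \phi(S+S')$ is invoked from V.1 rather than taken for granted, since without order-preservation the equivalence would break.
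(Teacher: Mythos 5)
Your proof is correct and follows essentially the same route as the paper, whose proof simply says the reasoning of Proposition~\ref{Pro:surjG} applies with $\cup$ replaced by $+$; you have just spelled out that reasoning, identifying $\im(\phi i - \phi i')$ with $\phi S + \phi S'$ and using V.1 for the automatic inclusion. No gaps.
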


\begin{proof}
 The same reasoning as that in the proof of Proposition \ref{Pro:surjG} applies, with $\cup$ replaced by $+$.
\end{proof}

If $\phi i - \phi i'$ is not surjective, then there are elements in $\phi(S + S')$ that do not admit a preimage in either $\phi S$ or $\phi S'$. Those points cannot be explained by $\phi S$ and $\phi S'$.

In the linear case, we win an extra characterization of what cannot be explained by the phenome. If $I$ denote the image of $\phi i - \phi i'$, then $\phi(S + S') = I \oplus \phi(S + S')/I$.  What can be explained by the phenome lies in $I$. What cannot be explained by the phenome, and is caused by generative effects, lies in $\phi(S + S')/I$.  Our goal is to characterize and recover $\phi(S + S')/I$.

\subsection{Interlude on exact sequences.}

If we live in a linear world, the relationship among the phenomes of the interconnected system and its separate subsystems will be established through the use of exact sequences.

A sequence of $\bR$-vector spaces $V_i$ and linear maps $f_i$
\begin{equation*}
  \begin{CD}
    \cdots  @>>> V_{i-1} @>f_i>> V_i @>f_{i+1}>> V_{i+1} @>>> \cdots\\
  \end{CD}
  \end{equation*}
is said to be exact at $V_i$ if $\im f_i = \ker f_{i+1}$.  The sequence is said to be exact if it is exact at every $V_i$.  In particular, the sequence:
\begin{equation*}
  \begin{CD} 0  @>>> U @>f>> V \end{CD}
\end{equation*}
is exact if, and only if, the map $f$ is injective.
Dually, the sequence:
\begin{equation*}
  \begin{CD} V  @>g>> W @>>> 0 \end{CD}
\end{equation*}
is exact if, and only if, the map $f$ is surjective.  As we shall see, if the phenomes $\phi(S)$, $\phi(S')$ and $\phi(S + S')$ were made to be part of an exact sequences, we can then relate them together.  For example, if the sequence:
\begin{equation*}
  \begin{CD} U  @>f>> V @>g>> W \end{CD}
\end{equation*}
is exact, then $V$ is isomorphic to $\im f \oplus \ker g$.  If the sequence is longer, the characterization may reach elements further apart.

\subsection{Loss of exactness, on the right.}  %% MAYBE NOT ALL VEILS WORK   \phi 0 = 0 need to have it be additive

Let $V$ and $W$ be vector spaces. We again consider $\system$ to be $\sub (V)$ and set up a veil $(\sub (W),\phi)$. For $S$ and $S'$ subspaces in $V$, let $j: S \ra S+S'$ and $j': S' \ra S+S'$ be the canonical injections. The sequence:
\begin{equation*}
  \begin{CD} S \oplus S' @>{j - j'}>> S + S' @>>> 0
\end{CD}\end{equation*}
is always exact.  Generative effects occur precisely when:
\begin{equation*}
  \begin{CD} \phi S \oplus \phi S' @>{\phi j-\phi j'}>> \phi (S + S') @>>> 0
\end{CD}\end{equation*}
is not exact. There is then a non-zero vector space $U$ corresponding to $\coker(\phi j- \phi j')$ and a surjective map $\phi (S + S') \ra U$ such that the sequence:
\begin{equation*}
  \begin{CD} \phi S \oplus \phi S' @>{\phi j-\phi j'}>> \phi (S + S') @>>> U @>>> 0
\end{CD}\end{equation*}
is exact.  The vector space $U$ corresponds to the unexplained phenomes, and is isomorphic to $\coker(\phi j- \phi j')$.  We, of course, do not know the map $\phi j- \phi j'$ as we do not know the phenome $\phi (S + S')$. The goal is then to recover $U$ from information on the systems $S$ and $S'$.  As an exemplary approach, we may perform such a recovery via the Snake lemma. 

\begin{Pro}[Snake Lemma, e.g., \cite{ATI1969} ch. 2, p. 23, proposition 2.10]  Given a commutative diagram of vector spaces with exact rows,
\begin{equation*}
\begin{CD}
  0  @>>>    U   @>f>>  V  @>g>>  W @>>> 0\\
  @.      @VV{u}V @VV{v}V @VV{w}V\\ 
  0  @>>> U  @>f'>> V' @>g'>> W' @>>> 0
\end{CD}
\end{equation*}

we get an exact sequence:
\begin{equation*}
  \minCDarrowwidth15pt\begin{CD}
   0 @>>> \ker u @>\tilde{f}>> \ker v @>\tilde{g}>> \ker w @>\delta>> \coker u @>\bar{f'}>> \coker v @>\bar{g'}>> \coker w @>>> 0.
  \end{CD}
\end{equation*}
\end{Pro}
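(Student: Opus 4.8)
The plan is to prove the statement by the classical diagram chase, which is valid verbatim for vector spaces (and more generally for modules, or in any abelian category), and which is the argument recorded in \cite{ATI1969}; I only sketch it here. I write $U'$, $V'$, $W'$ for the bottom row, so the lower-left object is $U'$. The first step is to produce the six arrows. The maps $\tilde f$ and $\tilde g$ are the restrictions of $f$ and $g$: if $x \in \ker u$ then $v(f(x)) = f'(u(x)) = 0$, so $f$ carries $\ker u$ into $\ker v$; likewise $g$ carries $\ker v$ into $\ker w$. Dually, $f'$ sends $\im u$ into $\im v$ (since $f'u = vf$) and hence descends to $\bar{f'} : \coker u \ra \coker v$, and similarly one obtains $\bar{g'}$. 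The compositions $\tilde g\,\tilde f$ and $\bar{g'}\,\bar{f'}$ vanish because $gf = 0$ and $g'f' = 0$.

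The substance of the construction is the connecting map $\delta : \ker w \ra \coker u$. Given $z \in \ker w$, use surjectivity of $g$ to pick $y \in V$ with $g(y) = z$; then $g'(v(y)) = w(g(y)) = 0$, so $v(y) \in \ker g' = \im f'$, and by injectivity of $f'$ there is a unique $x' \in U'$ with $f'(x') = v(y)$; define $\delta(z)$ to be the class of $x'$ in $\coker u$. I would then check that $\delta$ is well defined: a different lift of $z$ differs by an element of $\ker g = \im f$, say $f(x)$, which changes $v(y)$ by $f'(u(x))$ and hence changes $x'$ by $u(x)$, i.e.\ nothing in $\coker u$. Linearity of $\delta$ is then clear (the lifts may be chosen by a fixed linear section of $g$, or it is checked directly on sums and scalar multiples).

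It remains to verify exactness at each of the six spots. Exactness at $\ker u$ is injectivity of $\tilde f$, and exactness at $\coker w$ is surjectivity of $\bar{g'}$ — both immediate from the corresponding property of $f$ and $g'$. Exactness at $\ker v$ is a one-line chase ($y \in \ker v$ with $g(y) = 0$ gives $y = f(x)$, and $f'(u(x)) = v(y) = 0$ forces $x \in \ker u$), and exactness at $\coker v$ is the dual. The two remaining checks — $\im \tilde g = \ker \delta$ at $\ker w$, and $\im \delta = \ker \bar{f'}$ at $\coker u$ — are the real work: each is a chase through nearly the whole diagram. For instance, $\delta(z) = 0$ means the lift $y$ can be corrected so that $v(y) = 0$, i.e.\ $y \in \ker v$, whence $z = g(y) = \tilde g(y)$; combined with the identity $\delta\,\tilde g = 0$ this gives exactness at $\ker w$, and the statement at $\coker u$ is analogous.

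The main obstacle — though that overstates matters for so classical a lemma — is purely organizational: one must confirm that $\delta$ is well defined and then run the two chases at $\ker w$ and $\coker u$ without losing track of which map is being used to lift or to project. Everything else is bookkeeping. (For vector spaces one could instead observe that all rows split and compute an alternating sum of dimensions, but establishing exactness rather than merely the numerical identity still requires the chase, so the chase is the efficient route.)
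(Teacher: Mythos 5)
Your proposal is correct and is precisely the standard diagram chase that the paper itself defers to by citation (Atiyah--Macdonald, ch.~2): construction of $\tilde f$, $\tilde g$, $\bar{f'}$, $\bar{g'}$, the connecting map $\delta$ via lift--push--pull with the well-definedness check, and exactness verified at each of the six terms. Nothing in your sketch deviates from, or falls short of, the argument the paper points to, so no further comparison is needed.
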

\begin{proof}
  The lemma is standard, and its proof may be found in many texts, e.g., \cite{ATI1969} ch. 2, p. 22. 
\end{proof}

The exact sequence derived form the Snake lemma allows us to link the kernel and cokernel of $w$ to those of $u$ and $v$.
\begin{Pro} \label{Pro:split}
  If the sequence of vector spaces:
\begin{equation*}
  \minCDarrowwidth15pt\begin{CD}
   0 @>>> V_0 @>f>> V_1 @>>> V_2 @>>> V_3 @>g>> V_4 @>>> V_5 @>>> 0,
  \end{CD}
\end{equation*}
is exact, then $V_2 = \coker f \oplus \ker g$ and $V_5 = \coker g$.
\end{Pro}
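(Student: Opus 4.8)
The plan is to unwind the exactness conditions one node at a time, apply the first isomorphism theorem twice, and then use the fact that over a field every subspace has a complement. Write the sequence as
\[
0 \ra V_0 \xra{f} V_1 \xra{a} V_2 \xra{b} V_3 \xra{g} V_4 \xra{c} V_5 \ra 0,
\]
so that exactness furnishes exactly the following data: $f$ is injective; $\im f = \ker a$; $\im a = \ker b$; $\im b = \ker g$; $\im g = \ker c$; and $c$ is surjective. (I read $V_2 = \coker f \oplus \ker g$ as an isomorphism of vector spaces.)

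First I would compute $\coker f$. Since $\im f = \ker a$, we have $\coker f = V_1/\im f = V_1/\ker a$, and the first isomorphism theorem applied to $a$ identifies $V_1/\ker a$ with $\im a$, which by exactness at $V_2$ equals $\ker b$. Hence $\coker f \cong \ker b$. Next I would compute $\ker g$ the same way from the other side: the first isomorphism theorem applied to $b$ gives $V_2/\ker b \cong \im b$, and exactness at $V_3$ gives $\im b = \ker g$, so $V_2/\ker b \cong \ker g$. Now the short exact sequence $0 \ra \ker b \ra V_2 \ra V_2/\ker b \ra 0$ is a short exact sequence of vector spaces, hence splits, and therefore $V_2 \cong \ker b \oplus (V_2/\ker b) \cong \coker f \oplus \ker g$, which is the first claim.

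For the second claim I would use only the tail of the sequence: exactness at $V_5$ says $c$ is surjective, and exactness at $V_4$ says $\ker c = \im g$, so $V_5 = \im c \cong V_4/\ker c = V_4/\im g = \coker g$.

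There is no genuine obstacle here; it is a routine diagram chase. The one point that must be handled with care — and the reason the statement is phrased for vector spaces rather than for general abelian groups or modules — is the splitting step: the identification $V_2 \cong \ker b \oplus (V_2/\ker b)$ is not canonical and relies on choosing a linear complement of $\ker b$ inside $V_2$, which is available precisely because we work over a field. Beyond that, the only thing to watch is invoking exactness at the correct nodes: exactness at $V_1$ and $V_2$ for the decomposition of $V_2$, and exactness at $V_4$ and $V_5$ for $V_5 = \coker g$; the middle map $V_1 \ra V_2$ plays no role.
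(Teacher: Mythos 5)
Your proof is correct and follows essentially the same route as the paper: both identify $\coker f$ with $\ker(V_2\to V_3)$ and $\ker g$ with $\im(V_2\to V_3)\cong V_2/\ker(V_2\to V_3)$ via exactness and the first isomorphism theorem, and then invoke the splitting of short exact sequences of vector spaces; you additionally spell out the immediate identification $V_5=\coker g$, which the paper leaves implicit. One caveat about your closing summary only: the decomposition of $V_2$ also uses exactness at $V_3$ (to get $\im b=\ker g$), and the map $V_1\to V_2$ certainly does play a role---it is precisely the map to which you apply the first isomorphism theorem---but these slips in the commentary do not affect the argument itself.
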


\begin{proof}
  The following sequence is exact:
  \begin{equation*}
    0 \ra \im (V_1 \ra V_2) \ra V_2 \ra \im (V_2 \ra V_3) \ra 0
  \end{equation*}
  We have $\im (V_1 \ra V_2) = V_1/\ker(V_1 \ra V_2)$.  As $\ker(V_1 \ra V_2) = \im f$ by exactness of the six-term sequence, we get that $\im (V_1 \ra V_2) = \coker(f)$. By exactness, we also get $\im (V_2 \ra V_3) = \ker g$.
  Finally, short exact sequence of vectors spaces split. Namely, if $0 \ra U \ra V \ra W \ra 0$ is a sequence of vector spaces, then $V = U \oplus W$.
\end{proof}

The strategy to recover the phenome coming from generative effects would be to lift the sequence $S \oplus S' \ra S + S' \ra 0$ to be part of the diagram described in the Snake lemma. If we can think of our systems as linear maps, and encode the phenome as a kernel of those maps, then we may recover the map $\phi S \oplus \phi S' \ra \phi (S + S')$ as part of the kernel-cokernel exact sequence.  Furthermore, three columns are in play in the Snake lemma. The middle would correspond to the separate systems. The rightmost column would correspond to the interconnected system.  The leftmost will be made to correspond to the common part of the two systems on which they will be interconnected.\\

As a general insight, let $V$ and $V'$ be subspaces of some vector space, and consider the following commutative diagram:
\begin{equation*}
  \begin{CD}
    V\cap V'  @>i>> V\\
    @VVi'V         @VV{j}V\\
    V'  @>j'>>   V + V'
  \end{CD}
  \end{equation*}

\begin{Pro} \label{Pro:square}
  The sequence
  \begin{equation*}
    \begin{CD}
      0 @>>> V \cap V' @>(i,i')>> V \oplus V' @>j-j'>> V + V' @>>> 0
    \end{CD}
  \end{equation*}
  is exact.
\end{Pro}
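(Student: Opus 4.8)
The plan is to verify exactness directly at each of the three nonzero spots of the sequence, using only that $i,i'$ and $j,j'$ are inclusion maps into a single ambient vector space. Write a typical element of $V\oplus V'$ as $(a,b)$ with $a\in V$, $b\in V'$; then $(i,i')$ sends $t\in V\cap V'$ to $(t,t)$, and $j-j'$ sends $(a,b)$ to the element $a-b$ computed inside $V+V'$. Note also that the given square commutes, i.e. $j\circ i = j'\circ i'$ (both are the inclusion $V\cap V'\hookrightarrow V+V'$); this is what makes the composite $(j-j')\circ(i,i')$ vanish.

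First I would dispatch the two ``easy'' ends. Exactness at $V\cap V'$ is injectivity of $(i,i')$: if $(t,t)=(0,0)$ then $t=0$. Exactness at $V+V'$ is surjectivity of $j-j'$: given $w\in V+V'$, write $w=a+b$ with $a\in V$ and $b\in V'$, so that $w=a-(-b)=(j-j')(a,-b)$, exactly as in the earlier surjectivity proposition (with $-b\in V'$).

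The remaining step, exactness at the middle term, carries the only real content. The inclusion $\im(i,i')\subseteq\ker(j-j')$ follows since $(j-j')(t,t)=t-t=0$. For the reverse inclusion, take $(a,b)\in\ker(j-j')$, so $a=b$ as elements of $V+V'$; since $a\in V$ and $b\in V'$ coincide, their common value $t:=a=b$ lies in $V\cap V'$, and then $(i,i')(t)=(t,t)=(a,b)$. The main, admittedly mild, obstacle is precisely this observation: it uses in an essential way that $V$ and $V'$ sit inside one ambient space, so that the equation ``$a=b$'' is meaningful and pins the common element down into the intersection — with abstract spaces and abstract maps the argument would not go through. Having established the three exactness conditions, one may optionally remark that, being a short exact sequence of vector spaces, it splits, yielding $V\oplus V'\cong (V\cap V')\oplus(V+V')$ and hence the familiar identity $\dim V+\dim V'=\dim(V\cap V')+\dim(V+V')$; this is not needed for the statement but clarifies why the square above is the right thing to feed into the Snake lemma.
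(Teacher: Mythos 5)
Your proof is correct and follows essentially the same route as the paper's: injectivity of $(i,i')$, surjectivity of $j-j'$ via $w=a-(-b)$, and exactness in the middle by observing that $(j-j')(a,b)=0$ forces $a=b$ to be a common element of $V\cap V'$ (the paper phrases this via injectivity of $j$ and $j'$). No gaps; the splitting remark at the end is extra but harmless.
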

\begin{proof}
  The map $(i,i')$ is injective and $j-j'$ is surjective. The map $j-j'$ maps every element $(v,v)$ to $0$.  Conversely, if $j(v) - j'(v') = 0$, then $j(v) = j'(v')$.  As $j$ and $j'$ are injective, we get $v = v'$ and thus $v \in V \cap V'$.
\end{proof}

Interconnection of systems can then be thought of as a short exact sequence.  Generative effects is then equivalent to a loss of exactness, but only on the right.

\begin{Pro}
  The veil $\phi$ sustains generative effects, if and only if, the sequence
 \begin{equation*}
  \begin{CD} 0 @>>> \phi(S \cap S') @>(\phi i,\phi i')>> \phi S \oplus \phi S' @>{\phi j-\phi j'}>> \phi (S + S') @>>> 0
 \end{CD}\end{equation*}
  is not exact at $\phi (S + S')$ for some $S$ and $S'$.  The sequence is always exact at both $\phi (S \cap S')$ and $\phi S \oplus \phi S'$.
\end{Pro}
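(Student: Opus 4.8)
The plan is to treat the three exactness conditions in the sequence separately, since only the middle one carries real content. Exactness of the displayed sequence at its rightmost term $\phi(S+S')$ is, by definition, surjectivity of $\phi j - \phi j'$; but this is literally the map written $\phi i - \phi i'$ in Proposition~\ref{Pro:surjA} (only the names chosen for the canonical injections $S\to S+S'$ and $S'\to S+S'$ differ). Hence the first assertion of the proposition --- that $\phi$ sustains generative effects exactly when exactness fails at $\phi(S+S')$ for some $S,S'$ --- is just a restatement of Proposition~\ref{Pro:surjA}, and nothing further is needed there.

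For exactness at $\phi(S\cap S')$ I would show that $(\phi i,\phi i')$ is injective. Since $S\cap S'\le S$, axiom V.1 gives $\phi(S\cap S')\subseteq\phi S$, so the first coordinate $\phi i$ of the map $(\phi i,\phi i')$ is, under this inclusion of subspaces of $W$, the identity; in particular it is injective, hence so is $(\phi i,\phi i')$. This is immediate. The substantive point is exactness at the middle term $\phi S\oplus\phi S'$. Here the inclusion $\im(\phi i,\phi i')\subseteq\ker(\phi j-\phi j')$ is formal: it merely records that the square of subspace inclusions among $\phi(S\cap S')$, $\phi S$, $\phi S'$, $\phi(S+S')$ inside $W$ (all of which exist by V.1) commutes. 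For the reverse inclusion I would make everything concrete inside $W$: a pair $(x,x')\in\phi S\oplus\phi S'$ lies in $\ker(\phi j-\phi j')$ iff $x=x'$ as elements of $W$, so $\ker(\phi j-\phi j')$ is the diagonal copy of $\phi S\cap\phi S'$, while $\im(\phi i,\phi i')$ is the diagonal copy of $\phi(S\cap S')$. Thus exactness at the middle term is \emph{equivalent} to the equality of subspaces $\phi(S\cap S')=\phi S\cap\phi S'$ in $W$.

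It therefore remains to establish this equality, and this is the one place where axiom V.2 --- not merely V.1 --- is needed: V.1 alone yields only $\phi(S\cap S')\subseteq\phi S\cap\phi S'$. Axiom V.2 supplies, for each phenome $p$, the element $\psi p:=\min\{s: p\le\phi s\}$, and a short check shows that $\psi$ is left adjoint to $\phi$, i.e. $\psi p\le s$ iff $p\le\phi s$. Setting $T:=\phi S\cap\phi S'$, from $T\le\phi S$ and $T\le\phi S'$ we obtain $\psi T\le S$ and $\psi T\le S'$, hence $\psi T\le S\cap S'$; applying V.1 together with $T\le\phi(\psi T)$ gives $T\le\phi(S\cap S')$, which is the missing inclusion. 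Combined with the reverse inclusion from V.1 this gives $\phi(S\cap S')=\phi S\cap\phi S'$, completing the argument.

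I expect the only genuine obstacle to be recognizing that exactness in the middle is \emph{not} automatic from the commuting square, and that recovering it forces one to translate V.2 into the adjunction $\psi\dashv\phi$ and use it to conclude that $\phi$ preserves the binary meet $S\cap S'$; the remaining steps are routine bookkeeping with inclusions of subspaces of $W$.
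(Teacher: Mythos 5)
Your proof is correct and follows essentially the same route as the paper: exactness at $\phi(S\cap S')$ from injectivity of $(\phi i,\phi i')$, exactness at the middle term reduced to the equality $\phi(S\cap S')=\phi S\cap\phi S'$ of subspaces of $W$, and failure of exactness on the right identified with generative effects via Proposition~\ref{Pro:surjA}. The only difference is that where the paper simply asserts $\phi(S\cap S')=\phi(S)\cap\phi(S')$ ``as $\phi$ is a veil,'' you actually derive it from V.1 and V.2 through the adjunction $\psi\dashv\phi$, which is a welcome completion of a step the paper leaves implicit.
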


\begin{proof}
 The map $(\phi i, \phi i')$ is injective, thus the sequence is always exact at $\phi(S \cap S')$. The map ${\phi j-\phi j'}$ maps every element $(v,v)$ to $0$. Conversely, if $\phi j(v)-\phi j'(v') = 0$, then $\phi j(v) = \phi j'(v')$ and so $v = v'$ and belongs to $\phi(S) \cap \phi(S')$.  As $\phi$ is a veil, we have $\phi(S \cap S') = \phi(S) \cap \phi(S')$.  The loss of exactness on the right is characterized by Proposition \ref{Pro:surjA}.
\end{proof}

Through the use of the Snake lemma, we may then relate the phenome of the interconnected system to the separate subsystems and their common part. Of course knowing the phenomes of the separate systems and their common part does not entail us to know the phenome of the interconnected systems.  The cokernels in the exact sequence hold the additional information required to deduce generated phenomes.  The cokernels will encode the generativity of the systems.

\section{The solution, in a world that is not linear.}

We may directly apply the above technique only if the systems are \emph{linear}.  However, most of the settings were are interested in do not possess a linear structure. The goal is to lift our problems, say, to vector spaces.  Such a lift may be for instance achieved by encoding the desired information in the dimension of a vector space.

We will develop in this section a characterization for a simple formulation as a model example. The formulation is the common ground for most of the examples given in Section \ref{Examples}.

% Cascade effects
\subsection{The formulation.}

Let $S$ be a set $\{e_1,\cdots,e_n\}$ of $n$ elements.  Given an undirected graph $G$ over $S$, we are interested in whether or not there is an undirected path from $e_1$ to $e_n$ in $G$. Neither of the following graphs $G$ or $G'$ contains a path from $e_1$ to $e_3$.
\begin{equation*}
  \begin{tikzcd}[column sep = small, row sep = small]
   G:& a  &  b  \arrow[l,dash] &  c  &&  G':& a   &  b  &  c \arrow[l,dash]
  \end{tikzcd}
\end{equation*}
But if $G$ and $G'$ are combined together, by taking the union of their edge set,
\begin{equation*}
  \begin{tikzcd}[column sep = small, row sep = small]
   G \cup G':& a  &  b  \arrow[l,dash] &  c \arrow[l,dash]
  \end{tikzcd}
\end{equation*}
the edges synchronize and a path emerges.

Let $\G$ denote the semilattice of undirected graphs over $S$, where the join of $G, G' \in \G$ is the graph $G \cup G'$ containing the union of the separate edges.  We define $\phi : \G \ra 2^{\{*\}}$ to be the map such that $\phi G = \{*\}$ if $e_1$ and $e_n$ are connected in $G$ and $\phi G = \emptyset$ otherwise.

The map $\phi$ is not yet a veil, as it does not satisfy V.2.  We can make $\phi$ to be a veil by restricting it to only graphs that are a disjoint union of cliques. Those graphs form a semilattice where the join consists of combining the edges first, then adding edges in each existing connected component to form cliques.  We may then define a system to be a disjoint union of cliques, or equivalently, an equivalence relation on $S$. %
The fix will, however, not affect the solution at all.  The lift we perform will treat both the graph and its \emph{closure} as the same. We may then just ignore the fix, and consider all undirected graphs over $S$ as systems.

\subsection{The lift.}

To lift our problem into a linear world, we let $\bR^S$ denote the free vector space with basis $\{e_1,\cdots,e_n\}$.  If $G \in \G$, we define a subspace $I_G$ of $\bR^S$ to be the span of the vectors $e_i - e_j$ where $\{e_i,e_j\}$ is an edge in $G$.  We then lift every $G$ to a map:
\begin{equation*}
  g: \bR^2 \ra \bR^S/I_G
\end{equation*}
obtained by the composition of the inclusion $\bR^2 \ra \bR^S$ that sends the generators of $\bR^2$ to $e_1$ and $e_n$ in $\bR^S$, and the canonical surjection $\bR^S \ra \bR^S/I_G$.

\begin{Pro}
  The dimension of $\bR^S/I_G$ is equal to the number of connected components in $G$. \qed
\end{Pro}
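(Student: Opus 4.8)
The plan is to reduce the statement to the linear-algebra fact that $\dim I_G = n - c$, where $c$ denotes the number of connected components of $G$; once this is known, $\dim \bR^S/I_G = n - \dim I_G = c$ follows immediately.

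First I would introduce the linear map $\psi : \bR^S \ra \bR^c$ sending each basis vector $e_i$ to the standard basis vector of $\bR^c$ indexed by the connected component of $G$ containing $e_i$. Since every component has at least one vertex, $\psi$ is surjective, so $\dim \im \psi = c$; by rank--nullity it then suffices to prove $\ker \psi = I_G$. The inclusion $I_G \subseteq \ker \psi$ is clear: the two endpoints of any edge of $G$ lie in the same component, hence $\psi(e_i - e_j) = 0$ for every edge $\{e_i, e_j\}$, and these vectors span $I_G$.

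The reverse inclusion $\ker\psi \subseteq I_G$ is the only real content, and is where I expect the work to be. Fix a representative vertex $e_C$ in each connected component $C$. For any vertex $e_k$ of $C$, choose a path $e_k = v_0, v_1, \ldots, v_m = e_C$ in $G$ and telescope: $e_k - e_C = \sum_{t=1}^m (v_{t-1} - v_t) \in I_G$. Now let $x = \sum_i a_i e_i \in \ker\psi$. Reading off the coordinates of $\psi(x) = 0$, the sum of the coefficients $a_i$ over the vertices of each fixed component $C$ vanishes; therefore the restriction of $x$ to $C$ satisfies $\sum_{e_i \in C} a_i e_i = \sum_{e_i \in C} a_i (e_i - e_C) + (\sum_{e_i \in C} a_i)\, e_C = \sum_{e_i \in C} a_i (e_i - e_C)$, which lies in $I_G$. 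Summing over all components gives $x \in I_G$.

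With $\ker \psi = I_G$ established, the first isomorphism theorem gives $\bR^S/I_G \cong \im \psi = \bR^c$, so its dimension is $c$, as desired. The main obstacle is genuinely just the telescoping-along-a-path argument above; an equivalent route, which I would at most mention in passing, is to fix a spanning forest of $G$ and check directly that its $n - c$ edge-vectors $e_i - e_j$ are linearly independent and span $I_G$ (spanning because any edge-vector telescopes along the unique forest path between its endpoints), yielding $\dim I_G = n - c$ without reference to $\psi$.
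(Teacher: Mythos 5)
Your proof is correct. The paper states this proposition without proof (it is given as a standard fact, with the \qed attached to the statement), so there is no argument in the text to compare against; your route --- the component-indicator map $\psi : \bR^S \ra \bR^c$ with $\ker\psi = I_G$ established by telescoping $e_k - e_C$ along a path, followed by the first isomorphism theorem --- is a complete and standard way to supply the missing verification, and the spanning-forest count you mention in passing would do equally well.
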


The kernel of the map, denoted by $\Phi(I_G)$, will then encode the phenome.  For every $G$, we know that $e_1$ and $e_n$ are not in $I_G$. Furthermore, $e_1 - e_n \in I_G$ if, and only if, $e_1$ and $e_n$ are connected in $G$ via a path. We then have: 

\begin{Pro}
  The kernel $\Phi(I_G)$ is isomorphic to $\bR$ if $e_1$ and $e_n$ are connected in $G$ and is the $0$ vector space otherwise. \qed
\end{Pro}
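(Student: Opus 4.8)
\emph{Proof proposal.} Write $f_1,f_2$ for the two generators of $\bR^2$, so that $g$ is the composite $\bR^2\xra{\iota}\bR^S\xra{\pi}\bR^S/I_G$ with $\iota f_1=e_1$, $\iota f_2=e_n$, and $\pi$ the canonical surjection. Since $e_1\neq e_n$ (we have $n\ge 2$), the map $\iota$ is injective, and
\[
 \Phi(I_G)=\ker g=\{\, a f_1+b f_2 \ :\ a e_1+b e_n\in I_G \,\}.
\]
The plan is to determine this subspace of $\bR^2$ in two steps: first show it is always contained in the line $\bR\,(f_1-f_2)$; then decide, according to whether $e_1$ and $e_n$ are connected in $G$, whether it equals that whole line or is $0$.

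For the containment I would use the augmentation functional $\varepsilon:\bR^S\ra\bR$, $\varepsilon\bigl(\sum_i a_i e_i\bigr)=\sum_i a_i$. Every spanning vector $e_i-e_j$ of $I_G$ lies in $\ker\varepsilon$, hence $I_G\subseteq\ker\varepsilon$. Therefore, if $a e_1+b e_n\in I_G$ then $a+b=\varepsilon(a e_1+b e_n)=0$, i.e.\ $b=-a$; this gives $\ker g\subseteq\bR\,(f_1-f_2)$. (En passant this also recovers $e_1,e_n\notin I_G$, since $\varepsilon e_1=\varepsilon e_n=1$.)

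For the second step, observe that $f_1-f_2\in\ker g$ if and only if $\pi(e_1-e_n)=0$, i.e.\ if and only if $e_1-e_n\in I_G$, which by the observation preceding this proposition holds precisely when $e_1$ and $e_n$ are connected by a path in $G$. Combining the two steps: if $e_1$ and $e_n$ are connected, then $\ker g=\bR\,(f_1-f_2)\cong\bR$; if they are not connected, then $\ker g$ is a subspace of the one-dimensional space $\bR\,(f_1-f_2)$ that does not contain $f_1-f_2$, so $\ker g=0$. This is the asserted dichotomy.

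There is essentially no hard step here: the whole argument rests on the inclusion $I_G\subseteq\ker\varepsilon$ together with the already-granted characterization of when $e_1-e_n\in I_G$. An equivalent route would invoke the explicit isomorphism $\bR^S/I_G\cong\bR^{c}$ (with $c$ the number of connected components) sending each basis vector $e_i$ to the indicator of its component; then $\im g$ is spanned by the images of $e_1$ and $e_n$, which coincide and are nonzero if $e_1$ and $e_n$ are connected and are two distinct standard basis vectors otherwise, so $\dim\im g\in\{1,2\}$ accordingly, and rank--nullity for $g:\bR^2\ra\bR^S/I_G$ yields $\dim\ker g\in\{1,0\}$. The only point requiring a little care in either route is ruling out $\ker g=\bR^2$, which is exactly what the augmentation (or the rank count) accomplishes.
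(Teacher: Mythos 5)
Your proof is correct. The paper states this proposition without proof (it is given with a \qed, relying on the immediately preceding remarks that $e_1,e_n\notin I_G$ and that $e_1-e_n\in I_G$ exactly when $e_1$ and $e_n$ are connected), and your argument fills in the details soundly: the augmentation functional $\varepsilon$ is precisely what is needed to rule out the remaining possibility that some $a e_1+b e_n$ with $a+b\neq 0$ lies in $I_G$, a case the paper's two stated facts do not literally cover on their own. Your alternative route via the isomorphism $\bR^S/I_G\cong\bR^{c}$ indexed by connected components is the one most in line with the paper's preceding proposition on $\dim \bR^S/I_G$, and either version establishes the claimed dichotomy.
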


The lift also preserves interconnection of systems.

\begin{Pro}
 If $G, G'\in \G$, then $I_{G \cup G'} = I_G + I_{G'}$. \qed
\end{Pro}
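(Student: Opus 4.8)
The plan is to reduce the claim to the elementary fact that, for subsets $X$ and $Y$ of a vector space, $\operatorname{span}(X\cup Y)=\operatorname{span}(X)+\operatorname{span}(Y)$; in other words, the lift $G\mapsto I_G$ is simply carrying the join $\cup$ on $\G$ to the join $+$ on $\sub(\bR^S)$. First I would unwind the definitions. By construction the edge set of the graph $G\cup G'$ is the union of the edge set of $G$ and the edge set of $G'$. Hence the spanning family $\{\,e_i-e_j:\{e_i,e_j\}\text{ an edge of }G\cup G'\,\}$ of $I_{G\cup G'}$ is precisely the union of the spanning family $X=\{\,e_i-e_j:\{e_i,e_j\}\text{ an edge of }G\,\}$ of $I_G$ and the spanning family $Y=\{\,e_i-e_j:\{e_i,e_j\}\text{ an edge of }G'\,\}$ of $I_{G'}$.

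Then I would argue the two inclusions. For $I_G+I_{G'}\subseteq I_{G\cup G'}$: since $G$ and $G'$ are subgraphs of $G\cup G'$ we have $I_G\subseteq I_{G\cup G'}$ and $I_{G'}\subseteq I_{G\cup G'}$, and the sum $I_G+I_{G'}$ is by definition the smallest subspace containing both, so it is contained in $I_{G\cup G'}$. For the reverse inclusion: every generator $e_i-e_j$ of $I_{G\cup G'}$ arises from an edge of $G$ or of $G'$, hence lies in $I_G$ or $I_{G'}$, hence in $I_G+I_{G'}$; a subspace containing every generator contains their span, so $I_{G\cup G'}\subseteq I_G+I_{G'}$. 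Equality follows.

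There is only one point that deserves care, namely the interaction with the ``fix'' discussed just before the statement: in the semilattice of disjoint-unions-of-cliques the join is the clique-closure of $G\cup G'$, not the bare union. I would dispose of this by noting that $I_G$ is insensitive to clique-closure: if $e_i$ and $e_j$ lie in the same connected component of $G$, a path $e_i=e_{k_0},e_{k_1},\dots,e_{k_m}=e_j$ in $G$ gives $e_i-e_j=\sum_{t=1}^m(e_{k_{t-1}}-e_{k_t})\in I_G$, so $I_G$ already contains $e_i-e_j$ for every pair of vertices in a common component; thus $I_G=I_{\overline G}$, where $\overline G$ is the clique-completion of the components of $G$. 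Consequently $I_{G\vee G'}=I_{\overline{G\cup G'}}=I_{G\cup G'}=I_G+I_{G'}$, so the identity holds under either reading of the join, consistent with the remark that the lift treats a graph and its closure as the same. I do not expect a genuine obstacle here: the statement is pure linear-algebraic bookkeeping, and the only thing worth stating explicitly is this closure-invariance of $I_G$.
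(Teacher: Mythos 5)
Your proof is correct and is exactly the routine argument the paper omits (the proposition is stated with its proof left as immediate): the edge set of $G\cup G'$ is the union of the edge sets, so the spanning set of $I_{G\cup G'}$ is the union of the spanning sets of $I_G$ and $I_{G'}$, and the span of a union is the sum of the spans. Your additional observation that $I_G$ is invariant under clique-closure of components is also right and matches the paper's own remark that the lift treats a graph and its closure as the same, so nothing is missing.
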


In general, the space $I_G \cap I_G'$ is non-necessarily isomorphic to $I_{G \cap G'}$, even if $G$ and $G'$ are disjoint unions of cliques.  Finally, an inclusion of graphs induces maps on the lifts:
\begin{Pro}
  Let $H$ be a subgraph of $G$, the inclusion graph homomorphism $H \ra G$ lifts to a commutative diagram:
  \begin{equation*}
  \begin{CD}
    \bR^2      @>id>>       \bR^2\\
       @VVhV                            @VVgV\\
    \bR^S/I_H     @>i>>       \bR^S/I_{G}
  \end{CD}
  \end{equation*}
where $i$ is the canonical linear map.
\end{Pro}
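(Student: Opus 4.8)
The plan is to unwind the two pieces of notation involved. On the graph side, ``$H$ is a subgraph of $G$'' (both being graphs over the common vertex set $S$) simply means the edge set of $H$ is contained in that of $G$. On the lift side, the only thing that genuinely has to be checked is that this inclusion of edges translates into an inclusion of subspaces $I_H \subseteq I_G$; once that is in hand, both the map $i$ and the commutativity of the square are forced by the universal property of the quotient, so the statement is essentially a formality.

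First I would record the inclusion of subspaces. Each generator of $I_H$ has the form $e_i - e_j$ with $\{e_i,e_j\}$ an edge of $H$; since every edge of $H$ is an edge of $G$, each such vector is also a generator of $I_G$, hence $I_H \subseteq I_G$. (This is the linear shadow of the monotonicity of $G \mapsto I_G$ already implicit in the identity $I_{G \cup G'} = I_G + I_{G'}$.) Consequently the identity map of $\bR^S$ descends to a well-defined linear map $i : \bR^S/I_H \ra \bR^S/I_G$, $v + I_H \mapsto v + I_G$; this is precisely ``the canonical linear map'' of the statement, and it is characterised by $i \circ \pi_H = \pi_G$, where $\pi_H : \bR^S \ra \bR^S/I_H$ and $\pi_G : \bR^S \ra \bR^S/I_G$ denote the canonical surjections.

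Finally I would verify commutativity directly. By the definition of the lift, $h = \pi_H \circ \iota$ and $g = \pi_G \circ \iota$, where $\iota : \bR^2 \ra \bR^S$ is the fixed inclusion sending the two generators of $\bR^2$ to $e_1$ and $e_n$; the \emph{same} map $\iota$ appears in both rows exactly because $H$ and $G$ have the same vertex set $S$, which is also why the top arrow is literally the identity of $\bR^2$. Then $i \circ h = i \circ \pi_H \circ \iota = \pi_G \circ \iota = g = g \circ \mathrm{id}_{\bR^2}$, which is the asserted commutativity. I expect no real obstacle here; the one point worth a sentence of care is the convention that a subgraph is taken over the same vertex set, so that $\bR^S$, the inclusion $\iota$, and the top identity map all coincide for $H$ and $G$.
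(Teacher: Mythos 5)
Your proposal is correct and follows the same route as the paper's proof: edge inclusion gives $I_H \subseteq I_G$, so the canonical surjection $\bR^S \ra \bR^S/I_G$ factors through $\bR^S/I_H$ to produce $i$. Your explicit verification of commutativity via $h = \pi_H \circ \iota$ and $g = \pi_G \circ \iota$ is a detail the paper leaves implicit, but it is the same argument.
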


\begin{proof}
Let $h : H \ra G$ be an inclusion graph homomorphism. If $\{i,j\}$ is an edge in $H$, then $\{h(i),h(j)\}$ is an edge in $G$.  Thus if $i-j \in I_H$, then $hi - hj \in I_G$.  Therefore $I_H \simeq h I_H \subseteq I_G$. The canonical surjection $\bR^S \ra \bR^S/I_{G}$ then factors through $\bR^S/I_{H}$ to yield $i$. 
\end{proof}

The connected components of a graph $G$ form a basis for $\bR^S/I_{G}$. As $H$ is a subgraph of $G$, the map $i: \bR^S/I_H \ra \bR^S/I_{G}$ is surjective and sends connected components of $H$ to connected components of $G$ in a manner compatible with the inclusion.

\subsection{Recovering exactness.}

Those square diagrams can then be neatly fitted in an exact diagram:

\begin{Pro} The following diagram is commutative and has exact rows:
\begin{equation*}
\begin{CD}
  0  @>>>      \bR^2                 @>>>  \bR^2 \oplus \bR^2        @>>>      \bR^2                   @>>> 0\\
  @.             @VVV                                 @V(g,g')VV                            @VVV                            \\ 
  0  @>>>    \bR^S/(I_G \cap I_{G'})   @>(i,i')>> \bR^S/I_G \oplus \bR^S/I_{G'}    @>j - j'>>      \bR^S/(I_G + I_{G'})      @>>> 0
\end{CD}
\end{equation*}
\end{Pro}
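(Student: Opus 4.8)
The statement packages three claims: the top row is exact, the bottom row is exact, and the two squares commute. The plan is to clear the top row and the commutativity by direct inspection, and to obtain the bottom row from Proposition \ref{Pro:square} together with the Snake Lemma. First I would record what the unlabelled maps are. Let $\iota\colon\bR^2\to\bR^S$ be the fixed inclusion sending the two generators to $e_1$ and $e_n$ (the same map used to build $g$ and $g'$). The top row is $0\to\bR^2\xra{\Delta}\bR^2\oplus\bR^2\to\bR^2\to0$, with $\Delta$ the diagonal and the second map $(x,y)\mapsto x-y$; its exactness is the familiar split short exact sequence. The left vertical $\bR^2\to\bR^S/(I_G\cap I_{G'})$ is $\iota$ followed by the quotient map, the right vertical $\bR^2\to\bR^S/(I_G+I_{G'})$ is $\iota$ followed by its quotient map, the middle vertical is $(g,g')\colon(x,y)\mapsto(g(x),g'(y))$, and $i,i',j,j'$ are the canonical surjections between successive quotients of $\bR^S$ (which exist because $I_G\cap I_{G'}\subseteq I_G\subseteq I_G+I_{G'}$, and symmetrically).

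For the bottom row I would apply Proposition \ref{Pro:square} to $I_G,I_{G'}\subseteq\bR^S$ and map the resulting sequence into the one obtained from the same Proposition with $V=V'=\bR^S$:
\begin{equation*}
\begin{CD}
0 @>>> I_G \cap I_{G'} @>>> I_G \oplus I_{G'} @>>> I_G + I_{G'} @>>> 0\\
@. @VVV @VVV @VVV\\
0 @>>> \bR^S @>>> \bR^S \oplus \bR^S @>>> \bR^S @>>> 0
\end{CD}
\end{equation*}
the verticals being the evident inclusions, and both squares commuting on inspection. All three verticals are injective, so the Snake Lemma applied to this diagram (with $u,v,w$ those inclusions) collapses its six-term sequence to $0\to\coker u\to\coker v\to\coker w\to0$; identifying $(\bR^S\oplus\bR^S)/(I_G\oplus I_{G'})$ with $\bR^S/I_G\oplus\bR^S/I_{G'}$ turns this into exactly $0\to\bR^S/(I_G\cap I_{G'})\xra{(i,i')}\bR^S/I_G\oplus\bR^S/I_{G'}\xra{j-j'}\bR^S/(I_G+I_{G'})\to0$, the two maps being those induced by the diagonal and the difference on cokernels. (Alternatively the row can be checked by hand, copying the proof of Proposition \ref{Pro:square}: $(i,i')$ is injective, $j-j'$ is surjective, and a pair $(\bar v,\bar w)$ with $v-w=x+x'$, $x\in I_G$, $x'\in I_{G'}$, equals $(i,i')(\overline{v-x})$.)

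It then remains to check commutativity. For the left square, $x\in\bR^2$ goes along the top and down the middle to $(g(x),g'(x))=(\iota(x)+I_G,\ \iota(x)+I_{G'})$, which is also the image under $(i,i')$ of $\iota(x)+(I_G\cap I_{G'})$; for the right square, $(x,y)$ reaches $\bR^S/(I_G+I_{G'})$ either as the class of $\iota(x)-\iota(y)$ (across the top, then down the right) or as $j(g(x))-j'(g'(y))$ (down the middle, then across the bottom), and these agree. I expect no real obstacle in any of this. The one point worth stating explicitly is that the lower-left corner is $\bR^S/(I_G\cap I_{G'})$, built from the intersection of the lifted subspaces, which is in general strictly larger than $I_{G\cap G'}$; so this column is \emph{not} the lift of the system $G\cap G'$, but it is precisely the object the Snake Lemma will consume in the next step to produce the kernel--cokernel sequence relating the three phenomes.
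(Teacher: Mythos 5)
Your proposal is correct and follows essentially the same route as the paper: the bottom row is obtained by applying the Snake Lemma to the canonical diagram of inclusions $0\to I_G\cap I_{G'}\to I_G\oplus I_{G'}\to I_G+I_{G'}\to 0$ over $0\to\bR^S\to\bR^S\oplus\bR^S\to\bR^S\to 0$, whose upper row is exact by Proposition \ref{Pro:square}. You are merely more explicit than the paper about the points it leaves tacit (the collapse of the six-term sequence because the verticals are injective, the identification of cokernels with the quotients, the top row, and the commutativity of the two squares), which is fine.
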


\begin{proof}
  To show exactness of the bottom row, apply the Snake lemma to the canonical diagram:
  \begin{equation*}
    \begin{CD}
  0  @>>>    I_G \cap I_{G'}   @>>> I_G \oplus I_{G'}    @>>>      I_G + I_{G'}      @>>> 0\\
 @.             @VVV                                 @VVV                            @VVV                            \\ 
  0  @>>>    \bR^S  @>>> \bR^S \oplus \bR^S    @>>>      \bR^S     @>>> 0
  \end{CD}
    \end{equation*}  
  whose upper row we know is exact from Proposition \ref{Pro:square}.
\end{proof}

We may then recover an exact sequence from the Snake lemma. We first summarize the pieces of the sequence. 
  Every square diagram:
    \begin{equation*}
  \begin{CD}
    \bR^2      @>id>>       \bR^2\\
       @VVhV                            @VVgV\\
    \bR^S/I_H     @>i>>       \bR^S/I_{G}
  \end{CD}
    \end{equation*}
    can be extended to a commutative diagram with exact columns:
   \begin{equation*}
    \begin{CD}
      0           @.                 0\\
       @VVV                        @VVV\\
      \Phi(I_H)   @>\Phi(i)>>      \Phi(I_G)\\
      @VVV                        @VVV\\
    \bR^2      @>id>>       \bR^2\\
       @VVhV                          @VVgV\\
       \bR^S/I_H     @>i>>       \bR^S/I_{G}\\
       @VVV                   @VVV\\
      \bH(I_H) = \bR^S/(I_H + \<e_1,e_n\>)     @>\bH(i)>>       \bH(I_G) = \bR^S/(I_{G}  + \<e_1,e_n\>)\\
       @VVV                        @VVV\\
       0          @.                 0
  \end{CD}
 \end{equation*}
The space $\<e_1,e_n\>$ is the subspace of $\bR^S$ generated by $e_1$ and $e_n$.  The vector space $\bR^S/(I_G + \<e_1,e_n\>)$, denoted by $\bH(I_G)$, is the cokernel of $g$, and the map $\bH(i):\bH(I_H) \ra \bH(I_G)$ sends an element $a + I_H + \<e_1,e_n\>$ in $\bH(I_H)$ to $i(a + W_H) + \<e_1,e_n\>$ in $\bH(I_G)$. \\

The dimension of the space $\bH(I_G)$ is equal to the number of connected components in $G$ not containing either $e_1$ or $e_n$.  The map $\bH(i)$ then destroys all the components in $H$ that do not contain $e_1$ or $e_n$ in $H$ but that do contain them in the image component.\\

Finally, if $G, G' \in \G$, we recover an exact sequence.
\begin{equation*}
  \begin{tikzcd}[column sep = large]%, row sep = large]
    0  \arrow[r] & \Phi(I_G \cap I_{G'}) \arrow[r,"{(\Phi(i),\Phi(i'))}"] &  \Phi I_G \oplus \Phi I_{G'} \arrow[r,"{\Phi (j)-\Phi (j')}"] &  \Phi (I_{G \cup G'}) \arrow[dll] & \\
                 &  \bH(I_G \cap I_{G'}) \arrow{r}[swap]{(\bH (i),\bH (i'))} & \bH I_G \oplus \bH I_{G'} \arrow{r}[swap]{\bH (j)-\bH (j')} &  \bH (I_{G \cup G'}) \arrow[r] & 0
  \end{tikzcd}
\end{equation*}
\begin{Pro}
  We have: \[\Phi (I_{G \cup G'}) = \coker((\Phi(i),\Phi(i'))) \oplus \ker ((\bH (i),\bH (i')))\] and: \[\bH (I_{G \cup G'}) = \coker ((\bH (i),\bH (i'))).\]
\end{Pro}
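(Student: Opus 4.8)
The plan is to recognize the six-term sequence displayed just above the statement as the kernel--cokernel exact sequence that the Snake Lemma produces from the commutative diagram with exact rows of the preceding proposition, and then to read off the two isomorphisms directly from Proposition~\ref{Pro:split}. There is essentially nothing to do beyond assembling pieces that are already in place.

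First I would apply the Snake Lemma to the commutative diagram of the preceding proposition, denoting its three vertical maps $u \colon \bR^2 \ra \bR^S/(I_G \cap I_{G'})$, $v \colon \bR^2 \oplus \bR^2 \ra \bR^S/I_G \oplus \bR^S/I_{G'}$ (the map written $(g,g')$ there), and $w \colon \bR^2 \ra \bR^S/(I_G + I_{G'})$. Since the rows of that diagram are exact, the lemma yields the exact sequence
\begin{equation*}
 0 \ra \ker u \ra \ker v \ra \ker w \xrightarrow{\delta} \coker u \ra \coker v \ra \coker w \ra 0 .
\end{equation*}

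Second I would identify this sequence with the displayed one. By construction $\ker u = \Phi(I_G \cap I_{G'})$ and $\coker u = \bH(I_G \cap I_{G'})$; similarly $\ker w = \Phi(I_G + I_{G'})$ and $\coker w = \bH(I_G + I_{G'})$; and, since $\ker$ and $\coker$ commute with the finite direct sum, $\ker v = \Phi I_G \oplus \Phi I_{G'}$ and $\coker v = \bH I_G \oplus \bH I_{G'}$. Invoking the earlier identity $I_{G \cup G'} = I_G + I_{G'}$, the two outer nonzero terms become $\Phi(I_{G\cup G'})$ and $\bH(I_{G\cup G'})$. By functoriality of $\ker$ and $\coker$ the four induced horizontal maps are exactly $(\Phi(i),\Phi(i'))$, $\Phi(j)-\Phi(j')$, $(\bH(i),\bH(i'))$ and $\bH(j)-\bH(j')$, and the connecting morphism $\delta$ is the diagonal arrow drawn in the displayed diagram. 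Hence that diagram is an exact sequence of the shape treated in Proposition~\ref{Pro:split}.

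Finally I would apply Proposition~\ref{Pro:split} to it, with $(\Phi(i),\Phi(i'))$ in the role of $f$ and $(\bH(i),\bH(i'))$ in the role of $g$. The proposition then gives $\Phi(I_{G\cup G'}) = \coker f \oplus \ker g$ and $\bH(I_{G\cup G'}) = \coker g$, which is the assertion. I do not expect a genuine obstacle: the work is a citation of Propositions~\ref{Pro:square} and~\ref{Pro:split} plus the Snake Lemma. The one place that needs a little care is the bookkeeping of the second step---checking that $\delta$ really is the diagonal arrow of the diagram and that the induced maps on kernels and cokernels are the ones written $\Phi(j)-\Phi(j')$ and $\bH(j)-\bH(j')$---but this is immediate from the naturality of the kernel/cokernel construction and from the explicit description of $\bH(i)$ recorded just before the statement.
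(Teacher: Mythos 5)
Your proposal is correct and follows the paper's own route: the displayed six-term sequence is precisely the kernel--cokernel sequence obtained from the Snake Lemma applied to the commutative diagram with exact rows (using $I_{G\cup G'}=I_G+I_{G'}$ and the identifications $\ker = \Phi$, $\coker = \bH$), and the paper's proof is likewise just an application of Proposition~\ref{Pro:split} to that sequence. Your extra bookkeeping of the vertical maps and the connecting morphism is exactly the implicit content of the paper's preceding construction, so there is no substantive difference.
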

\begin{proof}
  Apply Proposition \ref{Pro:split} to the six-term exact sequence.
\end{proof}

The space $\coker((\Phi(i),\Phi(i')))$ encodes whether or not $e_1$ and $e_n$ are connected in one of the separate graphs, and the space $\ker((\bH (i),\bH (i')))$, or equivalently $\ker(\bH({i})) \cap \ker(\bH({i'}))$, encodes the formation of such a path via generative effects. In particular,
\begin{Pro}
 If $\Phi(I_G) = \Phi(I_{G'}) = 0$, then $\Phi(I_{G\cup G'}) = \ker(\bH({i})) \cap \ker(\bH({i'}))$. \qed
\end{Pro}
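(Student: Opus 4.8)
The plan is to read the statement off from the preceding Proposition together with the six-term exact sequence that precedes it. First I would observe that the hypothesis $\Phi(I_G) = \Phi(I_{G'}) = 0$ makes the direct sum $\Phi I_G \oplus \Phi I_{G'}$ the zero vector space. Since this is the codomain of the map $(\Phi(i),\Phi(i'))$, its cokernel $\coker((\Phi(i),\Phi(i')))$ — being a quotient of the zero space — also vanishes. Substituting $\coker((\Phi(i),\Phi(i'))) = 0$ into the splitting $\Phi(I_{G \cup G'}) = \coker((\Phi(i),\Phi(i'))) \oplus \ker((\bH(i),\bH(i')))$ supplied by the preceding Proposition gives $\Phi(I_{G \cup G'}) = \ker((\bH(i),\bH(i')))$, which is the desired identity once the kernel of the pairing is rewritten as an intersection of kernels.

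If one prefers to avoid invoking the splitting Proposition, the same conclusion follows directly from exactness: with $\Phi I_G \oplus \Phi I_{G'} = 0$, exactness at $\Phi(I_{G \cup G'})$ forces the connecting map $\delta : \Phi(I_{G \cup G'}) \to \bH(I_G \cap I_{G'})$ to be injective, while exactness at $\bH(I_G \cap I_{G'})$ identifies $\im \delta$ with $\ker((\bH(i),\bH(i')))$; hence $\delta$ is an isomorphism onto that kernel. Either route completes the argument.

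The remaining elementary point — already flagged as ``equivalently'' in the text just above — is that $\ker((\bH(i),\bH(i'))) = \ker(\bH(i)) \cap \ker(\bH(i'))$ as subspaces of $\bH(I_G \cap I_{G'})$, since the pairing sends $x$ to $(\bH(i)x, \bH(i')x)$ and so annihilates $x$ exactly when both coordinates do. I do not anticipate any genuine obstacle here: the content is already built into the six-term exact sequence, and this Proposition merely specializes it to the case where neither separate graph connects $e_1$ to $e_n$, so that the whole phenome of the union is generated — it consists precisely of the connected components of $G$ and of $G'$ that avoid $e_1$ and $e_n$ separately but merge into one $e_1$-to-$e_n$ component once the edge sets are combined.
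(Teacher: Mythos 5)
Your proposal is correct and matches the paper's (implicit) argument exactly: the paper states this as an immediate consequence of the preceding proposition, since $\Phi(I_G)=\Phi(I_{G'})=0$ forces $\coker((\Phi(i),\Phi(i')))=0$, leaving $\Phi(I_{G\cup G'})=\ker((\bH(i),\bH(i')))=\ker(\bH(i))\cap\ker(\bH(i'))$. Your alternative route through exactness and the connecting map is just an unwinding of the same six-term sequence, so there is nothing essentially different to compare.
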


As an explicit characterization, we have:
\begin{equation*}
  \ker((\bH (i),\bH (i'))) = \faktor{(I_G + \<e_1,e_n\>)\cap (I_{G'} + \<e_1,e_n\>)}{I_G \cap I_{G'} + \<e_1,e_n\>}
\end{equation*}
Whether or not a path is created is then encoded in the difference of the dimensions of $(I_G + \<e_1,e_n\>)\cap (I_{G'} + \<e_1,e_n\>)$ and $I_G \cap I_{G'} + \<e_1,e_n\>$.  Such a discrepancy would exist as the lattice of subspaces in a not a distributive lattice, but only a modular one.  This remark will not be further pursued.  The explicit characterization provided will however be further considered in a later section dealing with graphs in multiple universa. 

The vector spaces $\bH(I_.)$ also admit additional structure that allows them to be interpreted combinatorially.

\subsection{A combinatorial criterion.}

The graphs, in this section, are assumed, for simplicity, to be disjoint union of cliques.  
Let $\M(G) \subseteq 2^S$ be the collection of non-empty subsets $V \subseteq S$ of vertices, where the induced subgraph of $G$ on $V$ has a perfect matching.
The vector space $I_G$ can be derived from the set $\M(G)$.  %
The vector space $I_G \cap I_{G'}$ can be derived from $\M(G) \cap \M(G')$.

\begin{Def}
  If $\M \subseteq 2^S$, we define:
\begin{equation*}
  \bH \M = \bigg\{ \{e_1\}, \{e_n\}\bigg\} \cup \bigg\{V - \{e_1, e_n\} : V \in \M \text{ is minimal}\bigg\}. 
\end{equation*}
\end{Def}
The operation $\bH$ \emph{splits} the minimal elements of $\M$ whenever they contain either $e_1$ or $e_n$.  The set $\bH \M(G)$ then corresponds to the induced subgraph of $G$ on the nodes not connected to $e_1$ or $e_n$.  The splitting procedure, thus, destroys information in $\M$. For instance, if $G$ and $G'$ are the graphs:
\begin{equation*}
  \begin{tikzcd}[column sep = small, row sep = small]
      & e_3 \arrow[dl,dash] \arrow[d,dash] &   &&&&      & e_3 \arrow[dr,dash] &         \\
        G: e_1 \arrow[r,dash] & e_2 & e_4   &&&&    G': e_1 \arrow[r,dash] & e_2 & e_4 
  \end{tikzcd}
\end{equation*}
then $\bH \M(G) = \bH \M(G') = \big\{\{e_1\},\{e_2\},\{e_3\},\{e_4\}\big\}$, although $\M(G)$ and $\M(G')$ are different. 

\begin{Pro}
  If $\Phi(I_G) = \Phi(I_{G'}) = 0$, then $\Phi(I_{G \cup G'}) = \bR$ if, and only if, there exists a set $V \subseteq S$ such that:
  \begin{itemize}
  \item[C.1.] The set $V$ is a disjoint union of sets in $\bH \M(G)$,
  \item[C.1'.] The set $V$ is a disjoint union of sets in $\bH \M(G')$,
  \item[C.2.]  The set $V$ is not a disjoint union of sets in $\bH (\M(G) \cap \M(G'))$.  
  \end{itemize}
\end{Pro}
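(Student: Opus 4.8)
The plan is to push the claim through the explicit formula for $\ker((\bH(i),\bH(i')))$ established just above, reduce it to a strict inclusion of two subspaces of $\bR^S$, read that inclusion modulo $2$, and identify the resulting $\mathbb{F}_2$-conditions with C.1, C.1', C.2. Write $W_H:=I_H+\langle e_1,e_n\rangle$. Under the hypothesis $\Phi(I_G)=\Phi(I_{G'})=0$, the preceding proposition together with its explicit characterization gives $\Phi(I_{G\cup G'})\cong (W_G\cap W_{G'})/(I_G\cap I_{G'}+\langle e_1,e_n\rangle)$, a space which is always $0$ or $\bR$ since it \emph{is} $\Phi(I_{G\cup G'})$. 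Because $I_G\cap I_{G'}+\langle e_1,e_n\rangle\subseteq W_G\cap W_{G'}$, the assertion ``$\Phi(I_{G\cup G'})=\bR$'' is thus equivalent to the strict inclusion $I_G\cap I_{G'}+\langle e_1,e_n\rangle\subsetneq W_G\cap W_{G'}$.

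I would then make this inclusion explicit and pass to $\mathbb{F}_2$. Since $G,G'$ are disjoint unions of cliques, $I_H$ consists of the vectors with zero coordinate sum on each component of $H$; writing $C_1,C_n$ (resp.\ $C_1',C_n'$) for the components of $e_1,e_n$ in $G$ (resp.\ $G'$), which are distinct because $\Phi(I_G)=\Phi(I_{G'})=0$, one sees $W_H$ is cut out by those same equations with the ones for $C_1,C_n$ (resp.\ $C_1',C_n'$) deleted, and a short computation gives $I_G\cap I_{G'}+\langle e_1,e_n\rangle=\{x\in W_G\cap W_{G'}:\ \sum_{i\in C_1}x_i=\sum_{i\in C_1'}x_i\}$ (the analogous equation for $C_n,C_n'$ is automatic on $W_G\cap W_{G'}$ by splitting the total coordinate sum two ways, which also re-explains why the quotient is at most one-dimensional). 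So the strict inclusion amounts to the existence of $x\in W_G\cap W_{G'}$ with $\sum_{i\in C_1}x_i\neq\sum_{i\in C_1'}x_i$. Now all spaces in sight are defined over $\mathbb{Q}$ by totally unimodular incidence systems (characteristic vectors of components lying on two ``sides'') or are spanned over $\bR$ by $\{0,\pm1\}$-vectors (the ``alternating-cycle'' generators of $I_G\cap I_{G'}$ together with $e_1,e_n$), so reduction of their integer points modulo $2$ is faithful: it preserves dimension and commutes with the intersection in question. Hence the strict inclusion over $\bR$ becomes the same strict inclusion over $\mathbb{F}_2$, i.e.\ there is $V\subseteq S$, with $\chi_V:=\sum_{i\in V}e_i$, such that (a) $|V\cap C|$ is even for every component $C$ of $G$ other than $C_1,C_n$; (b) the same for $G'$; and (c) $|V\cap C_1|\not\equiv|V\cap C_1'|\pmod 2$.

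It remains to match (a)--(c) with the displayed conditions. The members of $\bH\M(G)$ are $\{e_1\}$, $\{e_n\}$, the singletons $\{a\}$ with $a$ in the $G$-component of $e_1$ or of $e_n$, and the pairs $\{a,b\}$ with $a,b$ adjacent in $G$ and outside $\{e_1,e_n\}$; unwinding this shows ``$V$ is a disjoint union of members of $\bH\M(G)$'' is equivalent to (a) --- in one direction because the pair-pieces are confined to components avoiding $e_1,e_n$ (with no singleton-pieces there), in the other by matching $V\cap C$ inside each such clique and covering $V\cap(C_1\cup C_n)$ by singletons. So (a)$\Leftrightarrow$C.1 and (b)$\Leftrightarrow$C.1'. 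For the last condition, the subsets of $S$ meeting every component of $G$ and every component of $G'$ in an even number of vertices are exactly the cycles of the binary matroid represented by the two-sided block-incidence matrix, hence exactly the disjoint unions of \emph{minimal} such ``even--even'' sets; tracking how $\bH$ strips off and re-adjoins $e_1,e_n$ then shows ``$V$ is a disjoint union of members of $\bH(\M(G)\cap\M(G'))$'' is equivalent to ``(a) and (b) and $|V\cap C_1|\equiv|V\cap C_1'|\pmod 2$'', which (granting C.1 and C.1') is exactly the negation of C.2. Therefore the existence of $V$ with (a),(b),(c) is precisely the existence of $V$ with C.1, C.1', C.2, which finishes the proof.

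The step I expect to be the main obstacle is the modulo-$2$ faithfulness used in the second paragraph: one must be certain that the one-dimensional gap over $\bR$ does not collapse over $\mathbb{F}_2$, equivalently that $\sum_{i\in C_1}x_i-\sum_{i\in C_1'}x_i$ attains an \emph{odd} value on some integral $x\in W_G\cap W_{G'}$, not merely a nonzero even one. The clean route is the total-unimodularity/bipartite-incidence observation for $W_G$, $W_{G'}$ and their intersection, together with the explicit $\{0,\pm1\}$-spanning set for $I_G\cap I_{G'}+\langle e_1,e_n\rangle$ and a dimension count; a hands-on alternative is to construct the witness directly as an alternating path from $e_1$ to $e_n$ through $G\cup G'$ and reduce it modulo $2$. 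A second point to handle with (standard) care is the binary-matroid fact that an ``even--even'' set decomposes as a disjoint union of minimal ones, which is what licenses stating C.2 with ``disjoint union'' rather than ``symmetric difference''.
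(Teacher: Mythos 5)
Your argument is correct, but it takes a genuinely different route from the paper's. The paper works directly on the graphs: given a minimal $V$ satisfying C.1, C.1$'$, C.2, it argues that the matchings forced by C.1 and C.1$'$ must chain together (minimality and C.2 excluding shared components and cycles) into a path from $e_1$ to $e_n$ in $G\cup G'$; conversely it picks a common element of $\ker\bH(i)\cap\ker\bH(i')$ of least support, an alternating sum $\sum_k(-1)^k e_{i_k}$, and takes $V$ to be its support. You instead stay on the algebraic side throughout: your reduction of $\Phi(I_{G\cup G'})=\bR$ to the existence of $x\in W_G\cap W_{G'}$ with $\sum_{C_1}x_i\neq\sum_{C_1'}x_i$ is correct (including the redundancy of the $C_n,C_n'$ equation), and your dictionary between the mod-$2$ conditions and C.1/C.1$'$/C.2 checks out --- in particular the delicate direction that (a), (b) and the congruence force a genuine \emph{disjoint-union} decomposition into members of $\bH(\M(G)\cap\M(G'))$ does work: toggle $e_1,e_n$ to reach an even--even set, decompose it into circuits of the bipartite component-incidence graph, strip $e_1,e_n$ back out and cover them by the singletons. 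The step you leave as a sketch, faithfulness of reduction mod $2$, is where the real work sits, and your proposed route does close it: the component-incidence matrix is the incidence matrix of a bipartite graph, hence totally unimodular, and appending the row $\chi_{C_1}-\chi_{C_1'}$ preserves total unimodularity (it is the sum of two rows of the associated directed incidence matrix, i.e.\ a node identification), so ranks, and hence the membership of that row in the row space, are unchanged under reduction mod $2$. Your route buys an explicitly computational certificate (two $\mathbb{F}_2$-rank computations) and an explanation of why ``disjoint union'' rather than ``symmetric difference'' is the right phrasing in C.2, at the cost of the total-unimodularity lemma, which the paper's bare-hands path argument never needs.
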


\begin{proof} 
  Let $V$ be a minimal such set.  As C.1 (resp. C.1') holds, the nodes in $V$ that are not connected to either $e_1$ or $e_n$ in $G$ (resp. $G'$) can be perfectly matched in $G$ (resp. $G'$).  As C.2 holds, by minimality of $V$, no two nodes in $V$ share the same component in both $G'$ and $G$.  Furthermore, as C.2 holds, no subset in $V$ can belong to a cycle in $G \cup G'$.  The nodes in $V$ then have to form a path between $e_1$ and $e_n$ in $G \cup G'$.
  
  Conversely, if $\Phi(I_{G \cup G'}) = \bR$, then $\ker \bH(i)$ and $\ker \bH(i')$ have a common element, an alternating sum $\sum_k (-1)^ke_{i_k}$ for $e_{i_k} \in S$ a vertex and $0 \leq k \leq m-1$.  Pick the common element consisting of the least number of vertices.  This set of vertices forms a set $V$ satisfying C.1, C.1' and C.2.
\end{proof}

\subsection{Concrete instances.}
The answer to the following three questions is obviously yes.  We answer them to illustrate the workings of the above theory.

\paragraph{Q.1.} If we combine $G$ and $H$, do we get a path from $a$ to $c$?
\begin{equation*}
  \begin{tikzcd}[column sep = small, row sep = small]
   G:& a  &  b  \arrow[l,dash] &  c  &&  H:& a   &  b  &  c \arrow[l,dash]
  \end{tikzcd}
\end{equation*}
We have $I_G = \<a-b\>$ and $I_H = \<b-c\>$, and thus $I_G \cap I_H = 0$.  We then get $\bH(i) : \bR \ra 0$ and $\bH(i') : \bR \ra 0$.  Clearly then $\Phi(I_{G\cup G}) = \ker(\bH(i)) \cap \ker(\bH(i')) = \bR$.

Combinatorially, we have:
\[\bH \M(G) = \bH \M(H) = \big\{\{a\},\{b\},\{c\}\big\}.\]
The set $\bH \M(G) \cap \M(H)$ is $\big\{\{a\}.\{c\}\big\}$ as no subset supports a perfect matching in both graphs.  The set $\{b\}$ satisfies the required conditions.

\paragraph{Q.2.} If we combine $G$ and $H$, do we get a path from $a$ to $e$?
\begin{equation*}
  \begin{tikzcd}[column sep = small, row sep = small]
    G:& a  &  b  \arrow[l,dash] & c & d \arrow[l,dash] & e    &&  H:& a   &  b  &  c \arrow[l,dash] & d & e \arrow[l,dash]
  \end{tikzcd}
\end{equation*}
We have $I_G = \< a-b, d-c\>$ and $I_H = \<b-c, d-e\>$, and thus $I_G \cap I_H = 0$. We get $\bH(i) : \bR^3 \ra \bR$ and $\bH(i') : \bR^3 \ra \bR$ where $\bR^3$ is generated by $b$, $c$ and $d$.  The map $\bH(i)$ sends both $b$ and $c-d$ to $0$.  The map $\bH(i')$ sends both $d$ and $b-c$ to $0$.  The element $b - c + d$ then generates the intersection of the kernels.  We then get $\Phi(I_{G \cup H}) = \bR$.

Combinatorially, we have:
\[\bH \M(G) = \big\{\{a\},\{b\},\{c,d\},\{e\}\big\} \quad and \quad \bH \M(H) = \big\{\{a\},\{b,c\},\{d\},\{e\}\big\}.\]
The set $\bH \M(G) \cap \M(H)$ is $\big\{\{a\}.\{e\}\big\}$ as no subset supports a perfect matching in both graphs.  The set $\{b,c,d\}$ satisfies the required conditions.

\paragraph{Q.3.} If we combine $G$ and $H$, do we get a path from $a$ to $d$?
\begin{equation*}
  \begin{tikzcd}[column sep = small, row sep = small]
    & & b \arrow[dl,dash] &    &&      & & b \arrow[dr,dash] &         \\
   G: & a  &  & d  \arrow[dl,dash]    &&    H:&  a  \arrow[dr,dash] &  & d      \\
   & & c &   &&  &&c& 
  \end{tikzcd}
\end{equation*}
We have $I_G = \<a-b,c-d\>$ and $I_H = \<a-c, b-d\>$, and then $I_G \cap I_H = \<a - b + d - c\>$.  We get $\bH(i) : \bR \ra 0$ and $\bH(i') : \bR \ra 0$. Clearly then $\Phi(I_{G\cup G}) = \ker(\bH(i)) \cap \ker(\bH(i')) = \bR$.

Combinatorially, we have:
\[\bH \M(G) = \bH \M(H) = \big\{\{a\},\{b\},\{c\},\{d\}\big\}.\]
The set $\bH \M(G) \cap \M(H)$ is $\big\{\{a\},\{b,c\},\{d\}\big\}$. Both the sets $\{b\}$ and $\{c\}$ satisfy the required conditions.

\subsection{Encoding generativity.}

We have thus related the phenome of the combined graph to the separate graphs through the objects $\bH(I)$ via an exact sequence:
\begin{equation*}
  \begin{tikzcd}[column sep = large]%, row sep = large]
    0  \arrow[r] & \Phi(I_G \cap I_{G'}) \arrow[r,"{(\Phi(i),\Phi(i'))}"] &  \Phi (I_G) \oplus \Phi (I_{G'}) \arrow[r,"{\Phi (j)-\Phi (j')}"] &  \Phi (I_{G \cup G'}) \arrow[dll] & \\
                 &  \bH(I_G \cap I_{G'}) \arrow{r}[swap]{(\bH (i),\bH (i'))} & \bH (I_G) \oplus \bH (I_{G'}) \arrow{r}[swap]{\bH (j)-\bH (j')} &  \bH (I_{G \cup G'}) \arrow[r] & 0
  \end{tikzcd}
\end{equation*}
The objects $\bH(I_.)$ may be then be seen to encode at least what is essential explain generative effects. As such, if the objects $\bH(I_.)$ are always $0$, one then concludes that generative effects are not sustained. As a rough converse, the objects $\bH(I_G)$ may be seen to encode only what is essential for generative effects in the system.  Indeed, we did not use all the information of the separate graphs. For instance, the problem of combining any of the two graphs:
\begin{equation*}
  \begin{tikzcd}[column sep = small, row sep = small]
      & e_3 \arrow[dl,dash] \arrow[d,dash] &   &&&&      & e_3 \arrow[dr,dash] &         \\
        e_1 \arrow[r,dash] & e_2 & e_4   &&&&    e_1 \arrow[r,dash] & e_2 & e_4     %       \\
%    & e_2 &   && & e_2 &   && & e_2 &
  \end{tikzcd}
\end{equation*}
with the graph:
\begin{equation*}
  \begin{tikzcd}[column sep = small, row sep = small]
               & e_3 &      \\
    e_1        & e_2 & e_4 \arrow[l,dash] % \\
%    & e_2 &  
  \end{tikzcd}
\end{equation*}
will yield the same exact sequence from the Snake Lemma.  The special theory developed in this paper is not, however, set up to discuss well how much of the information is kept from the system.  One however ought to expect a good variation.  In some cases, a large amount of information is irrelevant and will be brushed away by the $\bH(I)$ objects. On another end, one may devise example where almost everything from the system is fundamentally bound to play a part in generating effects.  Such an example may go along the lines of Example \ref{warmupExample}.  In such situations, the generativity of the system tends to get close to exactly what is concealed under the veil.

\subsection{For graphs living on different vertices.}

The graphs $G$ and $G'$ have both been defined over the same vertex set $S$.  The same problem can be more generally recast by gluing two graphs defined over different vertices over a common subsets of vertices. The problem is formally set up as follows.

Let $G$ and $G'$ be undirected graphs over $V$ and $V'$ respectively, and let $C$ be a set. The set $C$ is to be interpreted as the set of common vertices.  As such we are given inclusions:
\begin{equation*}
  i : C \ra V \quad \text{ and } \quad i' : C \ra V'.
\end{equation*}
The nodes $i(v)\in V$ and $i'(v)\in V'$ will be identified as the same vertices, to form the glued graph.  The gluing construction mathematically amounts to taking a pushout from $C$ along $i$ and $i'$.  Pushouts however are outside the scope of this paper, and we thus revert to an algorithmic construction.  We form an undirected graph $G^*$ over a set $V^*$.  If $n$, $n'$ and $c$ denote the cardinality of $V$, $V'$ and $C$, then $V^*$ has cardinality $n + n' - c$. The set $V^*$ will then be seen to contain both $V$ and $V'$. As such we have two inclusions:
\begin{equation*}
  j: V \ra V^* \quad \text{ and } \quad j': V' \ra V^*,
\end{equation*}
whose images coincide on $C$.  The edges of $G^*$ are $j(u)\sim j(v)$ whenever $u \sim v$ in $G$ and $j'(u')\sim j'(v')$ whenever $u' \sim v'$ in $G'$.

We then pick two distinguished vertices $s$ and $t$.  Each of the vertices $s$ and $t$ may either lie in $V$, in $V'$ or in both (i.e., in $C$).  The question then is:

\begin{Que}
  Given $G$ and $G'$ to be glued along a vertex set $C$, and a compatible choice of $s$ and $t$, is there an undirected path from $s$ to $t$ in $G^*$?
\end{Que}
If the set $C$ is equal to $V$ and $V'$ (i.e., $i$ and $i'$ are bijections), we then recover the original formulation of the situation, whereby the graphs are defined over the same vertex set.

\subsubsection{The characterization.}

We get different cases, depending on whether or not $s$ and $t$ lie in $V$ or $V'$.  All the cases can be dealt with mathematically in an implicit manner.  We will however deal with some explicitely for clarity of exposition, and present the rest in a generic form.  We however omit proofs and refer the reader to Chapter 8 of \cite{ADAM:Dissertation} for more details.  

Let $n$, $n'$ and $c$ be the cardinality of $V$, $V'$ and $C$.  The inclusion $i : C \ra V$ (resp. $i : C \ra V'$) induces an injective linear map $\iota : \bR^{c} \ra \bR^{n}$ (resp. $\iota' : \bR^{c} \ra \bR^{n'}$). If $I$ is a subspace of $\bR^n$ (resp. of $\bR^{n'}$), we define $\pi(I)$ (resp. $\pi'(I)$) to be $\{ a \in R^{c} : \iota a \in I\}$ (resp. $\pi'(I) = \{ a \in R^{c} : \iota' a \in I\}$.  Note that $\pi(I)$ and $\pi'(I)$ are both subspaces of $\bR^{c}$.

We will suppose that no path from $s$ to $t$ already exists in $G$ or in $G'$, separately.  That could either be due to the fact that either $s$ or $t$ is not in $V$ (or in $V'$), or that the edges, in the separate graphs, simply do not \emph{synchronize} to produce a path.

\paragraph{Case 1.}

We consider the case where $s$ and $t$ are in $G$ but not in $G'$.  A path is then created if, and only if:
\begin{equation*}
  \faktor{\pi(I_G + \<s,t\>) \cap \pi'(I_{G'})}{  \pi(I_G) \cap \pi'(I_{G'})} \neq 0 
\end{equation*}
Checking the presence of such an inequality amounts to comparing the dimensions of ${\pi(I_G + \<s,t\>) \cap \pi'(I_{G'})}$ and ${\pi(I_G) \cap \pi'(I_{G'})}$.  Note that these are subspaces of $\bR^{|C|}$, i.e. the computation is performed only over the common nodes. Furthermore, by fixing $G$ and $C$, we can vary $G'$ only computing $\pi'(I_{G'})$, without having to recompute additional information on $G$.

\paragraph{Case 2.}

We consider the case where $s$ is in $G$ but not in $G'$ and $t$ is in both. A path is then created if, and only if:
\begin{equation*}
  \faktor{\pi(I_G + \<s,t\>) \cap \pi'(I_{G'} + \<t\>)}{  \pi(I_G) \cap \pi'(I_{G'}) + \<t\>} \neq 0 
\end{equation*}

\paragraph{Case 3.}

We consider the case where $s$ is in $G$ but not in $G'$ and $t$ is in $G'$ but not in $G$.  A path is then created if, and only if:
\begin{equation*}
  \faktor{\pi(I_G + \<s\>) \cap \pi'(I_{G'} + \<t\>)}{  \pi(I_G) \cap \pi'(I_{G'})} \neq 0 
\end{equation*}

\paragraph{All other cases.}

In the general case, a path is created if, and only if:
\begin{equation*}
  \faktor{\pi(I_G + J_G) \cap \pi'(I_{G'} + J_{G'})}{  \pi(I_G) \cap \pi'(I_{G'}) + J_{G^*}} \neq 0 
\end{equation*}
where:

\begin{equation*}
  J_G = \left\{
  \begin{array}{ll}
    \<s\> & \text{if } s\in V \text{ and } t\notin V\\
    \<t\> & \text{if } s\notin V \text{ and } t\in V\\
    \<s,t\> & \text{if } s\in V \text{ and } t\in V\\
    0 & \text{if } s\notin V \text{ and } t\notin V
  \end{array}\right.
\end{equation*}

\begin{equation*}
  J_{G'} = \left\{
  \begin{array}{ll}
    \<s\> & \text{if } s\in V' \text{ and } t\notin V'\\
    \<t\> & \text{if } s\notin V' \text{ and } t\in V'\\
    \<s,t\> & \text{if } s\in V' \text{ and } t\in V'\\
    0 & \text{if } s\notin V' \text{ and } t\notin V'
  \end{array}\right.
\end{equation*}

\begin{equation*}
  J_{G*} = \left\{
  \begin{array}{ll}
    \<s\> & \text{if } s\in C \text{ and } t\notin C\\
    \<t\> & \text{if } s\notin C \text{ and } t\in C\\
    \<s,t\> & \text{if } s\in C \text{ and } t\in C\\
    0 & \text{if } s\notin C \text{ and } t\notin C
  \end{array}\right.
\end{equation*}

 We refer the reader to Chapter 8 of \cite{ADAM:Dissertation} for the details and some proofs on such characterizations. By defining bases, the dimensions may be computed using matrix operations.  Those operations could also lend themselves to combinatorial interpretations given the nature of the problem.  The direction will not be further pursued in this paper.

\section{Concluding remarks.}

This paper did not explicitely expound most of the mathematical connections that arise throughout.  Its goal was to present the theory with the least amount of diversion possible. We end with three general remarks.

The maps between the systems, and their lifts, played an important role in the characterization.  They did not, however, explicitely appear when interaction and generativity were initially defined.  They were, nevertheless, always implicit in the partial order on the semilattice.  The general level of the theory is then achieved by explicitely defining maps, or morphisms, between systems. In the general theory, those morphisms will be as important as (if not more important than) the systems themselves.

The lifts in this paper have been described through vector spaces.  Different linear (or \emph{abelian}) objects ought to, however, be used to capture richer structures in the semilattice of systems and veils.  The notion of exact sequences, and the results of the snake lemma, will remain in effects.  Recovering information from the exact sequence will, however, not be as straightforward as it is in the case of vector spaces. It might require extra information from the systems.  Our use of the snake lemma is, furthermore, only a special example of a mechanism relating the phenome of the combined system to the separate subsystems.

Finally, the tractabilty of the problem---coping with generative effects sustained by the veil---will depend on the tractability of the lifts. The same problem may possess different lifts.  The \emph{better} the lift is in capturing the structure of the problem, the \emph{better} the linking solution is.

\bibliographystyle{alpha}
\bibliography{MyBiblio}

\section{Appendix: On computability.}

The ideas presented in this paper, and the algebraic machinery, can be used to derive computable criteria for phenomes to emerge.  As a certificate, we implemented a verbose program (in \texttt{GAP v.4.8.3}) of the path example. Below is a snapshot of the output of the program.  The output is somewhat self-explanatory.  The program takes two graphs $G$ and $H$ as input.  The common nodes are then specified in both $G$ and $H$.  The endpoints $A$ and $E$ are also specified accordingly.  The program then outputs whether or not there exists a path from $A$ to $E$ once the common nodes are identified.  The answer is obtained via the characterization provided in the earlier section, on graphs living on different vertices.

\begin{verbatim}
-------------------------
 Info of G:
-------------------------
 Nb of nodes: 3
 Nb of edges: 1
 List of Edges: [ [ 1, 2 ] ]
 Nb of Components: 2
 Dim of H^1(G): 0

 No path in G.

-------------------------
 Info on H:
-------------------------
 Nb of nodes: 3
 Nb of edges: 1
 List of Edges: [ [ 2, 3 ] ]
 Nb of Components: 2
 Dim of H^1(H): 0

 No path in H.

-------------------------
 Info on Common:
-------------------------
 Nb of nodes: 3
 IDs in G: [ 1, 2, 3 ]
 IDs in H: [ 1, 2, 3 ]

-------------------------
 Distinguished nodes:
-------------------------
 A value of -1 indicates non-existence.
 A in G @: 1
 E in G @: 3
 A in H @: 1
 E in H @: 3

-------------------------
 Pulling Back
-------------------------
 Dim of pullback from G: 2
 Dim of pullback from H: 2
 Dim of Common System: 3
 Dim of H^1(Common): 1

-------------------------
 Decision Criterion:
-------------------------
 Dim of piH^1(G): 0
 Dim of piH^1(H): 0
 Dim of Kernel of F^3 --> piH^1(G)(+)piH^1(H): 3
 Quotienting subspace in H^1(Common): 2

 The program computes 4 objects:

 Dim of piI_G: 1
 Dim of piI_H: 1
 Dim of intersection: 0
 Augmented Dimension: 2    (1)

 Dim of pi(I_G+<A,E>): 3
 Dim of pi(I_H+<A,E>): 3
 Dim of intersection: 3    (2)

 Compare (1) and (2):  path emerges iff different.

#########################
 Path exists!
#########################
 
 Path emerged!
 
\end{verbatim}

Modifying the graph $H$ yields:

\begin{verbatim}
-------------------------
 Info of G:
-------------------------
 Nb of nodes: 3
 Nb of edges: 1
 List of Edges: [ [ 1, 2 ] ]
 Nb of Components: 2
 Dim of H^1(G): 0

 No path in G.

-------------------------
 Info on H:
-------------------------
 Nb of nodes: 3
 Nb of edges: 1
 List of Edges: [ [ 1, 2 ] ]
 Nb of Components: 2
 Dim of H^1(H): 0

 No path in H.

-------------------------
 Info on Common:
-------------------------
 Nb of nodes: 3
 IDs in G: [ 1, 2, 3 ]
 IDs in H: [ 1, 2, 3 ]

-------------------------
 Distinguished nodes:
-------------------------
 A value of -1 indicates non-existence.
 A in G @: 1
 E in G @: 3
 A in H @: 1
 E in H @: 3

-------------------------
 Pulling Back
-------------------------
 Dim of pullback from G: 2
 Dim of pullback from H: 2
 Dim of Common System: 2
 Dim of H^1(Common): 0

-------------------------
 Decision Criterion:
-------------------------
 Dim of piH^1(G): 0
 Dim of piH^1(H): 0
 Dim of Kernel of F^3 --> piH^1(G)(+)piH^1(H): 3
 Quotienting subspace in H^1(Common): 3

 The program computes 4 objects:

 Dim of piI_G: 1
 Dim of piI_H: 1
 Dim of intersection: 1
 Augmented Dimension: 3    (1)

 Dim of pi(I_G+<A,E>): 3
 Dim of pi(I_H+<A,E>): 3
 Dim of intersection: 3    (2)

 Compare (1) and (2):  path emerges iff different.

#########################
 No Path.
#########################

\end{verbatim}

Finally, as an example of graphs defined on different sets of vertices, we get:

\begin{verbatim}
-------------------------
 Info of G:
-------------------------
 Nb of nodes: 13
 Nb of edges: 10
 List of Edges: [ [ 3, 5 ], [ 1, 7 ], [ 6, 7 ], [ 6, 12 ], 
       [ 12, 2 ], [ 10, 8 ], [ 8, 9 ], [ 11, 9 ], [ 9, 4 ], [ 13, 4 ] ]
 Nb of Components: 3
 Dim of H^1(G): 2
 
 No path in G.

-------------------------
 Info on H:
-------------------------
 Nb of nodes: 17
 Nb of edges: 9
 List of Edges: [ [ 2, 4 ], [ 1, 5 ], [ 5, 13 ], [ 3, 7 ],
       [ 7, 9 ], [ 9, 15 ], [ 11, 9 ], [ 11, 13 ], [ 6, 10 ] ]
 Nb of Components: 8
 Dim of H^1(H): 7

 No path in H.

-------------------------
 Info on Common:
-------------------------
 Nb of nodes: 5
 IDs in G: [ 1, 2, 3, 4, 5 ]
 IDs in H: [ 1, 2, 3, 4, 5 ]

-------------------------
 Distinguished nodes:
-------------------------
 A value of -1 indicates non-existence.
 A in G @: 10
 E in G @: -1
 A in H @: -1
 E in H @: 9

-------------------------
 Pulling Back
-------------------------
 Dim of pullback from G: 3
 Dim of pullback from H: 2
 Dim of Common System: 4
 Dim of H^1(Common): 4

-------------------------
 Decision Criterion:
-------------------------
 Dim of piH^1(G): 2
 Dim of piH^1(H): 1
 Dim of Kernel of F^5 --> piH^1(G)(+)piH^1(H): 2
 Quotienting subspace in H^1(Common): 1

 The program computes 4 objects:

 Dim of piI_G: 2
 Dim of piI_H: 3
 Dim of intersection: 1
 Augmented Dimension: 1    (1)

 Dim of pi(I_G+<A,E>): 3
 Dim of pi(I_H+<A,E>): 4
 Dim of intersection: 2    (2)

 Compare (1) and (2):  path emerges iff different.

#########################
 Path exists!
#########################
 
 Path emerged!

\end{verbatim}

\end{document}